\lstdefinelanguage{pseudocode}{%
  morekeywords={skip,while,for,if,then,else,or,not,fi,procedure,%
    return,do,od,end,output,Protocol},%
  escapechar={*},%
  mathescape=true,%
  columns=flexible%
}
\theoremstyle{definition}
\newtheorem{definition}{Definition}
\theoremstyle{plain}
\newtheorem{theorem}[definition]{Theorem}
\newtheorem{lemma}[definition]{Lemma}
\DeclareMathOperator{\poly}{poly}
\DeclareMathOperator{\polylog}{polylog}
\DeclareMathOperator{\Ext}{Ext}
\DeclareMathOperator{\negl}{negl}
\newcommand{\E}{\textbf{E}}
\newcommand{\cG}{\mathcal{G}}
\newcommand{\cO}{\mathcal{O}}
\newcommand{\cW}{\mathcal{W}}
\newcommand{\cZ}{\mathcal{Z}}
\newcommand{\Stilde}{\widetilde{S}}
\DeclareMathOperator{\Gen}{Gen}
\DeclareMathOperator{\GenS}{GenS}
\DeclareMathOperator{\size}{size}
\newcommand{\iffull}{\iftrue}\newcommand{\ifshort}{\iffalse}
\renewcommand{\>}{\rangle}
\begin{document}
\iffull\title{General Hardness Amplification of Predicates and Puzzles}\fi
\ifshort\title{General Hardness Amplification of Predicates and Puzzles}\fi

\iffull
\author{Thomas Holenstein\thanks{Department of Computer Science, ETH
    Zurich, \texttt{thomas.holenstein@inf.ethz.ch}.  Work was done
    while the author was at Microsoft Research, Silicon Valley
    Campus.} \and Grant Schoenebeck\thanks{Department of Computer Science, Princeton University, Princeton NJ 08544, USA, \texttt{gschoene@princeton.edu}. Work was partially done while
    author was a summer intern at Microsoft Research Silicon Valley
    Campus and
    partially supported by a National Science Foundation Graduate
    Fellowship.}}
\fi
\ifshort\author{Anonymous Submission}
\fi
\maketitle
\begin{abstract}
  We give new proofs for the hardness amplification of efficiently
  samplable predicates and of weakly verifiable puzzles which generalize to new settings.  More
  concretely, in the first part of the paper, we give a new proof of
  Yao's XOR-Lemma that additionally applies to related theorems in the cryptographic
  setting.  Our proof seems simpler than previous ones, yet
  immediately generalizes to statements similar in spirit such as the
  extraction lemma used to obtain pseudo-random generators from
  one-way functions [H\aa{}stad, Impagliazzo, Levin, Luby, SIAM J. on
  Comp. 1999].

  In the second part of the paper, we give a new proof of hardness
  amplification for weakly verifiable puzzles, which is more general
  than previous ones in that it gives the right bound even for an arbitrary
  monotone function applied to the checking circuit of the underlying
  puzzle.

  Both our proofs are applicable in many settings of interactive
  cryptographic protocols because they satisfy a property that we call
  ``non-rewinding".  In particular, we show that any weak
  cryptographic protocol whose security is given by the
  unpredictability of single bits can be strengthened with a natural
  information theoretic protocol.   As an example, we show how these
   theorems solve the main open question from [Halevi and Rabin,
   TCC2008] concerning bit commitment.
   \ifshort
   
   \smallskip
   \textbf{Keywords: } Hardness Amplification, Weakly Verifiable
   Puzzles, XOR Lemma
   \fi
\end{abstract}

\ifshort
\thispagestyle{empty}
\setcounter{page}{0}
\fi

\section{Introduction}
In this paper, we study two scenarios of hardness amplification.  In
the first scenario, one is given a predicate~$P(x)$, which is
somewhat hard to compute given~$x$.  More concretely: $\Pr[A(x) =
P(x)] \leq 1-\frac{\delta}{2}$ for any $A$ in some given complexity
class, where typically~$\delta$ is not too close to $1$ but at least
polynomially big (say, $\frac{1}{\poly(n)} < \delta <
1-\frac{1}{\poly(n)}$).  One then aims to find a predicate which is
even harder to compute.

In the second scenario, one is given a computational search problem,
specified by some relation~$R(x,y)$.  One then assumes that no
algorithm of a certain complexity satisfies $\Pr[(x,A(x)) \in R] >
1-\delta$, and again, is interested in finding relations which are
even harder to satisfy.  It is sometimes the case that~$R$ may only be
efficiently computable given some side information generated while
sampling~$x$.  Such problems are called ``weakly verifiable puzzles''.

Our aim is to give proofs for theorems in both scenarios which are
both simple and versatile.  In particular, we will see that our proofs
are applicable in the interactive setting, where they give
stronger results than those previously known.

\subsection{Predicates}
\paragraph{Overview and previous work}
Roughly speaking, Yao's XOR-Lemma \cite{Yao82} states that if a
predicate $P(x)$ is somewhat hard to compute, then the $k$-wise XOR
$P^{\oplus k}(x_1,\ldots,x_k) := P(x_1) \oplus \dots \oplus P(x_k)$
will be even harder to compute.  While intuitive, such statements are
often somewhat difficult to prove.  The first proof of the
above appears to be by Levin \cite{Levin87} (see also
\cite{GoNiWi95}).  In some cases, even stronger statements are needed:
for example, the extraction lemma states that one can even extract
several bits out of the concatenation $P(x_1)P(x_2)\dots P(x_k)$, which look pseudorandom to a distinguisher given $x_1,\ldots,x_k$.
Proving this statement for tight parameters is considered the
technically most difficult step in
the original proof that one-way functions imply pseudorandom
generators \cite{HILL99}.  Excluding this work, the
easiest proof available seems to be based on Impagliazzo's hard-core
set theorem \cite{Impagl95}, more concretely the uniform version of it
\cite{Holens05, BaHaKa09}.  A proof along those lines is given in
\cite{Holens06, HaHaRe06b}.  Similar considerations are true for the
more efficient proof that one-way functions imply pseudorandom
generators given by Haitner et al.\cite{HaReVa10}.

\paragraph{Contributions of this paper}

In this paper, we are concerned with statements of a similar nature as
(but which generalize beyond) Yao's XOR-Lemma.  We give a new theorem,
which is much easier to prove than the hard-core set theorem, and
which is still sufficient for all the aforementioned applications.

Our main observation can be described in relatively simple terms.  In
the known proof based on hard-core sets (\cite{Impagl95,Holens05}),
the essential statement is that there is a large set $S$, such that
for~$x\in S$ it is computationally difficult to predict $P(x)$ with a
non-negligible advantage over a random guess.  Proving the existence
of the set $S$ requires some work (basically, boosting, as shown in
\cite{KliSer99}).  We use the idea that the set $S$ can be
made \emph{dependent} on the circuit which attempts to predict $P$.
The existence of a hard set $S$ for a particular circuit is a much
easier fact to show (and occurs as a building block in some proofs of
the hard-core theorem).  For our idea to go through, $S$ has to be
made dependent on some of the inputs to~$C$ as well as some other
fixed choices.  This technique of switching quantifiers resembles
a statement in \cite{BaShWi03}, where Impagliazzo's hard-core
set theorem is used to show that in some
definitions of pseudo-entropy it is also possible to switch
quantifiers.

Besides being technically simpler, making the set~$S$ dependent on~$C$
has an additional advantage.  For example, consider a proof of the XOR
Lemma.  To get a contradiction, a circuit~$C$ is assumed which does
well in predicting the XOR, and a circuit~$D$ for a single instance is
built from~$C$.  On input~$x$, $D$ calls~$C$ as a subroutine several
times, each time ``hiding''~$x$ as one of the elements of the input.
Using our ideas, we can ensure that $x$ is hidden always in the same
place $i$, and even more, the values of the inputs
$x_1,\ldots,x_{i-1}$ are constant and independent of $x$.  This
property, which we call non-rewinding, is useful in the case one wants
to amplify the hardness of interactive protocols.

We remark that in this paper we are not concerned with
efficiency of XOR-Lemmas in the sense of derandomizing them (as in, e.g.,
\cite{ImpWig97, ImJaKa06, IJKW08}).

\subsection{Weakly Verifiable Puzzles}
\paragraph{Overview and Previous Work}
The notion of weakly verifiable puzzles was introduced by Canetti et
al.~\cite{CaHaSt05}.  A weakly verifiable puzzle consists of a
sampling method, which produces an instance~$x$ together with a
circuit~$\Gamma(y)$, checking solutions.  The task is, given~$x$ but
not necessarily~$\Gamma$, to find a string~$y$ for which $\Gamma(y) =
1$.  One-way functions are an example: $\Gamma(y)$ just outputs~$1$ if
$f(y) = x$ (since $\Gamma$ depends on the instance it can contain
$x$).  However, weakly verifiable puzzles are more general, since
$\Gamma$ is not given at the time $y$ has to be found.

Canetti et al.~show that if no efficient algorithm finds solutions
with probability higher than~$\delta$, then any efficient algorithm
finds $k$ solutions simultaneously with probability at
most~$\delta^k+\epsilon$, for some negligible $\epsilon$.  This result
was strengthened by \cite{ImJaKa09}, showing that requiring
some~$\delta' > \delta + 1/\poly(n)$ fraction of correct answers
already makes efficient algorithms fail, if~$k$ is large enough.
Independently of the current work, Jutla \cite{Jutla10} improved their
bound to make it match the standard Chernoff bound. A different
strengthening was given in \cite{HalRab08}, where it was noted that
the algorithm in \cite{CaHaSt05} has an additional property which
implies that it can be applied in an interactive cryptographic
setting, also they studied how much easier solving a weakly verifiable
puzzle becomes if one simply asks for a single correct solution from
$k$ given puzzles.  Also independently of our work, Chung et
al.~\cite{CLLY09} give a proof for the threshold case (similar to
Jutla) which is also applicable in an interactive setting; however,
their parameters are somewhat weaker than the ones given by most other
papers.  Finally, \cite{DIJK09} gives yet another
strengthening: they allow a weakly verifiable puzzle to have multiple
solutions indexed by some element $q$, and the adversary is allowed to
interactively obtain some of them.  They then study under what
conditions the hardness is amplified in this setting.

\paragraph{Contributions of this paper}
In this work, we present a theorem which unifies and strengthens the
results given in \cite{CaHaSt05,HalRab08,ImJaKa09,Jutla10,CLLY09}: assume a
monotone function $g: \{0,1\}^k \to \{0,1\}$ specifies which
subpuzzles need to be solved in order to solve the resulting puzzle
(i.e., if $c_1,\ldots,c_k$ are bits where $c_i$ indicates that a valid
solution for puzzle $i$ was found, then $g(c_1,\ldots,c_k)=1$ iff this
is sufficient to give a valid solution for the overall case.)  Our
theorem gives a tight bound for any such $g$ (in this sense, previous
papers considered only threshold functions for $g$).  Furthermore, as
we will see our proof is also applicable in an interactive setting
(the proofs given in \cite{ImJaKa09, Jutla10} do not have this
property).  Our proof is heavily inspired by the one given in
\cite{CaHaSt05}.

\subsection{Strengthening Cryptographic Protocols}
\paragraph{Overview and Previous Work}
Consider a cryptographic protocol, such as bit commitment.  Suppose
that a non-perfect implementation of such a protocol is given, which
we would like to improve.  For example, assume that a cheating receiver
can guess the bit committed to with some probability, say~$3/5$.
Furthermore, suppose that a cheating sender can open the commitment in
two ways with some probability, say~$1/5$.  Can we use this protocol
to get a stronger bit commitment protocol?

Such questions have been studied in various forms both in the
information theoretic and the computational model \cite{DaKiSa99,
  DFMS04, DwNaRe04, Holens05, HolRen05, Wullsc07, HalRab08}.

However, all of the previous computational work except \cite{HalRab08}
focused on the case where the parties participating in the protocol
are at least semi-honest, i.e., they follow the protocol correctly
(this is a natural assumption in the case for the work on key agreement
\cite{DwNaRe04, Holens05, HolRen05}, as in this case the participating
parties can be assumed to be honest).  An exception to this trend was
the work by Halevi and Rabin \cite{HalRab08}, where it was shown that
for \emph{some} protocols, the information theoretic bounds also apply
computationally.

The above are results in case where the protocol is repeated
\emph{sequentially}.  The case where the protocol is repeated in
parallel is more complicated
\cite{BeImNa97,PieWik07,PasVen07,HPWP10,Haitne09,ChuLiu10}.  


\paragraph{Contributions of this paper}
We explicitly define ``non-rewinding" (which was, however, pointed to
in \cite{HalRab08}) which helps to provide a sufficient condition for
transforming complexity theoretic results into results for
cryptographic protocols.  Using, the above results, and specifically
that the above results are non-rewindable, we show that we can
strengthen any protocol in which the security goal is to make a bit
one party has unpredictable to the other party, in the case where an
information theoretic analogue can be strengthened.  We also study
interactive weakly verifiable puzzles (as has been done implicitly in
\cite{HalRab08}), and show that natural ways to amplify the hardness
of these work.

We only remark that our proof is applicable to parallel repetition for
non-interactive (two-round) protocols (e.g. CAPTCHAs).

\ifshort Due to space restrictions, many of the proofs and even some
of the formal statements of theorems have been moved to the appendix.
\fi

\section{Preliminaries}

\begin{definition} Consider a circuit $C$ which has a tuple of
  designated input wires labeled $y_1,\ldots,y_{k}$.  An oracle
  circuit $D(\cdot)$ with calls to $C$ is \emph{non-rewinding} if
  there is a fixed $i$ and fixed strings $y_{1}^*$ to $y_{i-1}^*$ such
  that for any input $y$ to $D$, all calls to $C$ use inputs
  $(y_1^*,\ldots,y_{i-1}^*,y)$ on the wires labeled $y_1,\ldots,y_i$.
\end{definition}

\begin{definition}
  Let~$C$ be a circuit which has a block of input wires labeled $x$.
  An oracle circuit~$D$ which calls $C$ (possibly several times)
  treats $x$ obliviously if the input $x$ to $D$ is forwarded to $C$
  directly, and not used in any other way in $D$.
\end{definition}

We say that an event happens \emph{almost surely} if it has probability
$1-2^{-n}\poly(n)$.

We denote by $[m]$ the set $\{1, \ldots, m\}$.  The density of a
set~$S \subseteq \{0,1\}^n$ is $\mu(S) = \frac{|S|}{2^n}$.  We
sometimes identify a set~$S$ with its characteristic function $S:
\{0,1\}^n \to \{0,1\}$.  We often denote a tuple $(x_1, x_2, \ldots,
x_k)$ by $x^{(k)}$.

If a distribution $\mu$ over some set is given, we write~$x\leftarrow
\mu$ to denote that $x$ is chosen according to $\mu$.  We sometimes
identify sets with the uniform distribution over them.
We let $\mu_\delta$ be the Bernoulli distribution over~$\{0,1\}$ with
parameter~$\delta$, i.e., $\Pr_{x\leftarrow \mu_\delta}[x=1] =
\delta$.  Furthermore, $\mu_\delta^k$ is the distribution
over~$\{0,1\}^k$ where each bit is i.i.d.\ according to $\mu_\delta$.

When two interactive algorithms $A$ and $B$ are given, we will denote
by $\<A,B\>_A$ the output $A$ has in an interaction with $B$, and by
$\<A,B\>_B$ the output which $B$ has.  We sometimes consider
probabilities like $\Pr[\<A,B\>_A = \<A,B\>_B]$, in which case the
probability is over random coins of $A$ and $B$ (if any), but they are
chosen the same on the left and the right hand side.

\section{Efficiently Samplable Predicates}
\subsection{Single Instance}
\subsubsection{Informal Discussion}
Fix a predicate $P: \{0,1\}^n \to \{0,1\}$ and a circuit $C(x, b, r)$
which takes an arbitrary $x \in \{0,1\}^n$, a bit $b \in \{0,1\}$, and
some randomness $r$ as input.  We may think of $C$ as a circuit which
tries to distinguish the case $b = P(x)$ from the case $b = 1-P(x)$.
Our idea is to identify a set $S$ for which we can show the following:
\begin{enumerate}
\item If $x$ is picked randomly from $S$, then $\Pr[C(x, P(x), r) = 1] \approx \Pr[C(x, 1-P(x), r)=1]$.
\item $C$ can be used to predict $P(x)$ for a uniform random $x$
  correctly with probability close to $1 - \frac{1}{2} \mu(S)$
\end{enumerate}
On an informal level, one could say that $S$ explains the hardness of
computing $P$ from $C$'s point of view: for elements from $S$ the
circuit just behaves as a uniform random guess, on the others it
computes (or, more accurately, \emph{helps} to compute) $P$.  Readers
familiar with Impagliazzo's hardcore lemma will notice the similarity:
Impagliazzo finds a set which explains the computational difficulty of
a predicate for \emph{any} circuit of a certain size.  Thus, in this
sense Impagliazzo's theorem is stronger.  The advantage of ours is
that the proof is technically simpler, and that it can be used in the
interactive setting (see Section~\ref{sec:interactionPredicate}) which
seemingly comes from the fact that it helps to build non-rewinding proofs.

\subsubsection{The Theorem}

The following theorem formalizes the above discussion.  It will find
$S$ by producing a circuit which recognizes it, and also produces a
circuit $Q$ which uses $C$ in order to predict $P$.

\begin{theorem} \label{'theorem single pred'} Let $P:
  \{0,1\}^n\to\{0,1\}$ be a computable predicate.  There is an
  algorithm $\Gen$ which takes as input a randomized circuit
  $C(x,b,r)$ and a parameter $\epsilon$, and outputs two deterministic
  circuits $Q$ and $S$, both of size $\size(C)\cdot\poly(n,
  \frac{1}{\epsilon})$, as well as $\delta \in [0,1]$, such that
  almost surely the following holds:
  \begin{description}
  \item[Large Set:] $S(x,P(x))$ recognizes a set $S^* = \{x |
    S(x,P(x)) = 1\}$ of density at least $\mu(S^*) \geq \delta$.
  \item[Indistinguishability:] For the above set $S^*$ we have
    \begin{align}
      \bigl|\Pr_{x\leftarrow \{0,1\}^n, r}[C(x,P(x),r)=1] -
      \Pr_{x\leftarrow \{0,1\}^n, r}[C(x,P'(x),r)=1]\bigr| \leq \epsilon,
    \end{align}
    where $P'(x) := P(x) \oplus S(x)$, i.e., $P'$ is the predicate
    which equals~$P$ outside~$S$ and differs from~$P$ within~$S$.
  \item [Predictability:] $Q$ predicts $P$ well: $\displaystyle
    \Pr_{x\leftarrow\{0,1\}^n}[Q(x) = P(x)]
    \geq 1-\frac{\delta}{2}$.
  \end{description}
  Additionally, these algorithms have the following
  properties:
  \begin{enumerate}
    \ifshort\setlength{\itemsep}{-5pt}\fi
  \item Unless $\delta = 1$ algorithm $Q$ predicts slightly
    better:\footnote{This implies that $\delta \geq
      \frac{\epsilon}{2}$, which can always be guaranteed.}
    $\Pr[Q(x)= P(x)]\geq 1-\frac{\delta}{2} + \frac{\epsilon}{4}$.
  \item If $P$ is efficiently samplable (i.e., pairs $(x,P(x))$ can be
    generated in polynomial time), $\Gen$ runs in time $\poly(n,
    \frac{1}{\epsilon})$.
  \item $\Gen$, $S$, and $Q$ can be implemented with oracle access to
    $C$ only (i.e., they do not use the description of $C$).
  \item When thought as oracle circuits, $S$ and $Q$ use the oracle
    $C$ at most $\cO(\frac{n}{\epsilon^2})$ times.  Also, they both
    treat $x$ obliviously, and their output only depends on the number
    of $1$'s obtained from the oracle calls to $C$ and, in case of
    $S$, the input $P(x)$.
  \end{enumerate}
\end{theorem}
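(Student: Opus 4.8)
The plan is to construct $S$ via a simple iterative/boosting-style process on the single circuit $C$. Start by considering the "measure" of how much $C$ distinguishes $b=P(x)$ from $b=1-P(x)$. Concretely, for a set $T\subseteq\{0,1\}^n$ define the advantage $\mathrm{adv}(T) := \Pr_{x\leftarrow\{0,1\}^n,r}[C(x,P(x),r)=1 \wedge x\in T] - \Pr_{x\leftarrow\{0,1\}^n,r}[C(x,1-P(x),r)=1 \wedge x\in T]$; here $x\in T$ is checked by a circuit that gets $(x,P(x))$. The idea is to greedily throw into $S$ exactly those $x$ on which $C$, averaged over $r$, "votes" for $b=P(x)$ over $b=1-P(x)$ by some threshold, and to repeat this in rounds, each time flipping $P$ to $P'$ inside the current $S$ and re-examining $C$'s behavior relative to the modified predicate. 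Each round that fails the Indistinguishability bound contributes at least $\approx\epsilon/\poly(n)$ to a bounded potential (the total advantage is at most $1$, or more carefully we track $\sum$ of per-round advantages which telescopes), so after $\mathrm{poly}(n,1/\epsilon)$ rounds the process must stop with a set $S^*$ satisfying Indistinguishability. Set $\delta := \mu(S^*)$, which is what is output.

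For the circuits themselves: $S$ is the circuit that, on input $(x,P(x))$, runs the (finitely many, $O(n/\epsilon^2)$) rounds, in each round estimating $\Pr_r[C(x,b,r)=1]$ for the two values of $b$ by taking $O(n/\epsilon^2)$ independent oracle calls to $C$ with fresh randomness (enough that a Chernoff bound makes the empirical count concentrate to within $O(\epsilon)$ with probability $1-2^{-n}\poly(n)$, which is where "almost surely" enters), and decides membership purely from the number of $1$'s returned — giving property 4 (oblivious in $x$, output a function of the count and of $P(x)$). The predictor $Q(x)$ works as in the standard hardcore/XOR argument: sample a random bit $\hat b$, feed $C(x,\hat b, r)$; if $C$ outputs $1$, guess $P(x)=\hat b$, else guess $P(x)=1-\hat b$ — but amplified over $O(n/\epsilon^2)$ samples and reading off the majority vote, so $Q$ too is oblivious in $x$ and depends only on the count. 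The Predictability claim $\Pr[Q(x)=P(x)]\ge 1-\tfrac{\delta}{2}$ follows because outside $S^*$ this guessing rule has advantage bounded below by the per-$x$ distinguishing gap of $C$ (which is nonnegative there by construction of $S$), while inside $S^*$ — which has density $\delta$ — we only lose, in expectation, at most half of it; the extra $+\epsilon/4$ in property 1 when $\delta<1$ comes from the fact that the process stopped, so the residual advantage on $\{0,1\}^n\setminus S^*$ relative to the original $P$ is still at least $\approx\epsilon/2$, i.e.\ $C$ genuinely helps there. Properties 2 and 3 are immediate from the construction: $\Gen$ only needs sample access to pairs $(x,P(x))$ to estimate the various probabilities to accuracy $\epsilon/\poly(n)$ (again Chernoff), and nothing ever touches the description of $C$.

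The main obstacle I expect is getting the bookkeeping of the iterative process exactly right so that all four of Large Set, Indistinguishability, Predictability, and the sharp constant $+\epsilon/4$ hold \emph{simultaneously} for the \emph{same} $\delta$ and the \emph{same} $S^*$ — in particular, reconciling the "global" quantity controlled by the stopping condition (the $\epsilon$-indistinguishability gap, which is about $C$'s behavior on the flipped predicate $P'$) with the "local" per-$x$ advantages that drive the predictor's success, across the rounds where $P$ gets modified on the accumulated set. A secondary technical point is handling all the empirical estimates uniformly: $S$, $Q$, and $\Gen$ each make only $O(n/\epsilon^2)$ oracle calls, yet we need every one of the $\mathrm{poly}(n,1/\epsilon)$ estimates to be simultaneously accurate except with probability $2^{-n}\poly(n)$, which forces a careful union bound and is the reason the guarantees are stated as holding only "almost surely." Once the potential/stopping analysis is pinned down, the rest is routine Chernoff bounds and size accounting.
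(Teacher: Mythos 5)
There is a genuine gap, and it is precisely the point the paper identifies as the crux of this theorem: your predictor $Q$ does not work. You define $Q$ as an amplified majority vote --- estimate $\Pr_r[C(x,b,r)=1]$ for $b=0,1$ and output the $b$ for which it is larger --- and claim that ``inside $S^*$ we only lose, in expectation, at most half of it.'' That claim is false for the majority rule. Write $\Delta_x := \Pr_r[C(x,P(x),r)=1]-\Pr_r[C(x,1-P(x),r)=1]$. Any correct choice of $S^*$ (one for which Indistinguishability holds) must consist of the $x$ with the \emph{smallest} $\Delta_x$, and nothing prevents almost all of $S^*$ from having $\Delta_x<0$; on such $x$ the amplified majority rule is wrong with probability close to $1$, not $\tfrac12$, so $Q$'s overall success can drop to about $1-\mu(S^*)$ instead of $1-\tfrac{\mu(S^*)}{2}$. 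The paper's proof exists essentially to repair this: it sorts all $x$ by $\Delta_x$, cuts at a point $x^*$ where $\frac{1}{2^n}\sum_{x\preceq x^*}\Delta_x$ is small but \emph{positive} (between $\epsilon/20$ and $\epsilon/10$), and replaces the majority rule by a randomized rule that outputs $1$ with probability $\tfrac12\bigl(1+\tfrac{(2P(x)-1)\Delta_x}{\Delta_{x^*}}\bigr)$ when the gap is below the threshold $\Delta_{x^*}$. Because this rule's per-$x$ success probability is \emph{linear} in $\Delta_x$, it averages over $S^*$ to $\tfrac12+\Omega(\epsilon)$ using the positivity of $\sum_{x\in S^*}\Delta_x$, whereas the step-function success profile of the majority rule does not. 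Without this (or an equivalent) device, Predictability and the $+\epsilon/4$ bonus of property~1 cannot be established.

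Two secondary issues. First, your construction of $S$ is stated backwards: you ``greedily throw into $S$ exactly those $x$ on which $C$ votes for $b=P(x)$ by some threshold,'' but those are exactly the $x$ that must stay \emph{outside} $S^*$ (they are where $Q$ predicts correctly); flipping $P$ on them would maximally violate Indistinguishability. The iterative rounds and potential argument are also unnecessary --- a single sort-and-cut on $\Delta_x$ suffices, and the paper needs the cut to be a prefix of a fixed ordering so that $\Delta_{x^*}$ is a well-defined threshold for $Q$. Second, your accounting of oracle calls is inconsistent: $O(n/\epsilon^2)$ rounds each making $O(n/\epsilon^2)$ fresh calls gives $O(n^2/\epsilon^4)$ calls, exceeding the $\cO(n/\epsilon^2)$ of property~4; the paper instead fixes one set of $m=O(n/\epsilon^2)$ random strings once in $\Gen$ and reuses them, which also makes $\Delta_x$ exactly computable and the ordering deterministic.
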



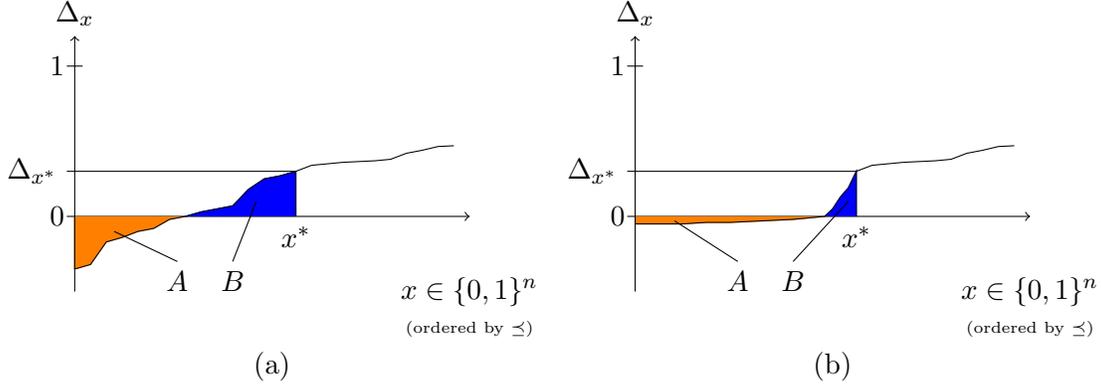
\begin{figure}
  \centering
    \begin{tikzpicture}[yscale=2,xscale=1.05]
      \draw[->]
                (0,-0.5) -- (0,0) node [left] {$0$} --
                (0,1) node [left] {$1$} -- (0,1.2) node [above] {$\Delta_x$};
      \draw[-]  (-0.1,1) -- (0.1,1);
      \draw[->] (-0.1,0) -- (5,0);
      \draw (5, -0.5) node {$x \in \{0, 1\}^n$};
      \draw (5, -0.75) node {\tiny(ordered by~$\preceq$)};

      \draw[fill, orange]
               (0,0) --
               (0.0, -0.35) -- (0.2, -0.32) -- (0.4,-.17) -- (0.6,-.14) -- (0.8,-.1) --
               (1.0, -.08 ) -- (1.2,  -.02) -- (1.4,0 ) ;
      \draw[-] (0,0) --
               (0.0, -0.35) -- (0.2, -0.32) -- (0.4,-.17) -- (0.6,-.14) -- (0.8,-.1) --
               (1.0, -.08 ) -- (1.2,  -.02) -- (1.4,0 ) ;

      \draw[fill, blue]
               (1.4,0) -- (1.6,0.03 ) -- (1.8,0.05 ) --
               (2.0,0.07) -- (2.2,0.18) --
               (2.2,0.18) -- (2.4,0.25) -- (2.6,0.27) -- (2.8,.30) -- (2.8,0) node [black,below] {$x^*$};
      \draw[-]
               (1.4,0) -- (1.6,0.03 ) -- (1.8,0.05 ) --
               (2.0,0.07) -- (2.2,0.18) --
               (2.2,0.18) -- (2.4,0.25) -- (2.6,0.27) -- (2.8,.30) -- (2.8,0);

      \draw[-] (2.8,.3) --
               (3.0,.34 ) -- (3.2,.35 ) -- (3.4,.36 ) -- (3.6,.365 ) -- (3.8,.37) --
               (4.0,.38 ) -- (4.2,.42 ) -- (4.4,.44 ) -- (4.6,.465 ) -- (4.8,.47);
      \draw[-] (2.8,.3) -- (-.1,.3) node [left] {$\Delta_{x^*}$};

      \draw[-] (0.5,-0.1) -- (1.3, -0.3) node[below] {$A$};
      \draw[-] (2.3,0.1) -- (2.0, -0.3) node[below] {$B$};

      \draw (2.5, -1) node {(a)};
    \end{tikzpicture}
    \begin{tikzpicture}[yscale=2,xscale=1.05]
      \draw[->]
                (0,-0.5) -- (0,0) node [left] {$0$} --
                (0,1) node [left] {$1$} -- (0,1.2) node [above] {$\Delta_x$};
      \draw[-]  (-0.1,1) -- (0.1,1);
      \draw[->] (-0.1,0) -- (5,0);
      \draw (5, -0.5) node {$x \in \{0, 1\}^n$};
      \draw (5, -0.75) node {\tiny(ordered by~$\preceq$)};

      \draw[fill, orange]
               (0,0) --
               (0.0, -0.05) -- (0.3, -0.05) -- (0.6,-.05) -- (0.9,-.04) -- (1.2,-.04) --
               (1.6, -.03 ) -- (2.0,-.02) -- (2.4, 0) ;
      \draw[-] (0,0) --
               (0.0, -0.05) -- (0.3, -0.05) -- (0.6,-.05) -- (0.9,-.04) -- (1.2,-.04) --
               (1.6, -.03 ) -- (2.0,-.02) -- (2.4, 0) ;

      \draw[fill, blue]
               (2.4,0) -- (2.5,0.05 ) -- (2.6,0.13) --
               (2.7,0.19) -- (2.8,.30) -- (2.8,0) node [black,below] {$x^*$};
      \draw[-]
               (2.4,0) -- (2.5,0.05 ) -- (2.6,0.13) --
               (2.7,0.19) -- (2.8,.30) -- (2.8,0);

      \draw[-] (2.8,.3) --
               (3.0,.34 ) -- (3.2,.35 ) -- (3.4,.36 ) -- (3.6,.365 ) -- (3.8,.37) --
               (4.0,.38 ) -- (4.2,.42 ) -- (4.4,.44 ) -- (4.6,.465 ) -- (4.8,.47);
      \draw[-] (2.8,.3) -- (-.1,.3) node [left] {$\Delta_{x^*}$};

      \draw[-] (0.5,-0.03) -- (1.3, -0.3) node[below] {$A$};
      \draw[-] (2.7,0.1) -- (2.0, -0.3) node[below] {$B$};
      \draw (2.5, -1) node {(b)};
    \end{tikzpicture}
    \caption{Intuition for the proof of Theorem \ref{'theorem single
        pred'}.  In both pictures, on the vertical axis, the advantage
      of the circuit in guessing right over a random guess is
      depicted.  The elements are then sorted according to this
      quantity.  The point $x^*$ is chosen such that the area of $A$
      is slightly smaller than the area of~$B$ (as in
      equation~(\ref{eq:3})).}
    \label{fig:littlepicture}
\end{figure}

\iffull
Before we give the proof, we would like to mention that
the proof uses no new techniques.  For example, it is very similar to
Lemma 2.4 in \cite{Holens05}, which in turn is implicit in
\cite{Levin87, GoNiWi95} (see also Lemma 6.6 and Claim 7 on page 121
in \cite{Holens06}).  Our main contribution here is to give the
statement and to note that it is very powerful.
\fi
\ifshort The proof uses no new techniques.  For example, it is very
similar to Lemma 2.4 in \cite{Holens05}, which in turn is implicit in
\cite{Levin87, GoNiWi95} (see also Lemma 6.6 and Claim 7 on page 121
in \cite{Holens06}).  Our main contribution here is to give the
statement and to note that it is very powerful.  The proof itself is only given in the appendix of 
the paper.  It is only remarkable for how straight-forward
it is (given the statement).  
\fi

\paragraph{Proof Overview.}
We assume that overall $C(x,P(x),r)$ is more often $1$ than
$C(x,1-P(x),r)$.  Make $S$ the largest set for which the
Indistinguishability property is satisfied as follows: order the
elements of $\{0,1\}^n$ according to $\Delta_x :=
\Pr_{r}[C(x,P(x),r)=1] - \Pr_{r}[C(x,1-P(x),r)=1]$, and insert them
into $S$ sequentially until both $\Pr_{x\leftarrow S,
  r}[C(x,P(x),r)=1] > \Pr_{x\leftarrow S, r}[C(x,1-P(x),r)=1]$ and
indistinguishability is violated.  Then, it only remains to
describe~$Q$.  For any $x\notin S$ note that $\Pr[{C(x,P(x), r)=1}] -
\Pr[{C(x,1-P(x), r)=1}] \geq \epsilon$, as otherwise $x$ could be
added to $S$.  Thus, for those elements $P(x)$ is the bit $b$ for
which $\Pr[C(x,b,r)=1]$ is bigger.  In this overview we assume that
$\Pr[{C(x, b, r)=1}]$ can be found exactly, so we let $Q(x)$ compute
the probabilities for $b = 0$ and $b = 1$, and answer accordingly; we
will call this rule the ``Majority Rule''.  Clearly, $Q(x)$ is correct
if $x \notin S$, and in order to get ``predictability'', we only need
to argue that $Q$ is not worse than a random guess on $S$.

Consider now Figure~\ref{fig:littlepicture} (a), where the elements
are ordered according to $\Delta_x$.  The areas depicted $A$ and $B$
are roughly equal, which follows by the way we chose $S$ (note
that $\Pr_{x \leftarrow S,r}[C(x,P(x),r)=1]-\Pr_{x \leftarrow
  S,r}[C(x,1-P(x),r)=1] = \E_{x \leftarrow S}[\Delta_x]$).

At this point our problem is that the majority rule will
give the incorrect answer for all elements for which $\Delta_x < 0$,
and as shown in Figure~\ref{fig:littlepicture} (b), this can be almost
all of $S$, so that in general the above $Q$ \emph{does} perform worse
than a random guess on $S$.  The solution is to note that it is sufficient to
follow the majority rule in
 case the gap is bigger than $\Delta_{x^*}$.  In the full proof we
will see that if the gap is small so that $-\Delta_{x^*} \leq \Pr[C(x,0,r)=1] - \Pr[C(x,1,r)=1] \leq \Delta_{x^*}$
then a randomized decision works: the probability
of answering $b=0$ is $1$ if the gap is $ -\Delta_{x^*}$, the probability of answering
$b=0$ is 0 if the gap is $ \Delta_{x^*}$.  When the gap is in between then the probability of answering
$b=0$ is linearly interpolated based on the value of the gap.  So for example, if the gap is 0, then $b=0$ with probability 
$\frac{1}{2}$.\footnote{It may be instructive
  to point out another rule which does not work: if one produces a
  uniform random bit in case the gap is smaller than $\Delta_{x^*}$
  then elements in the region marked $A$ with negative gap larger than
  $\Delta_{x^*}$ are problematic.}  A bit of thought reveals that this
is exactly because the areas $A$ and $B$ in
Figure~\ref{fig:littlepicture} are almost equal.

In the full proof, we also show how to sample all quantities
accurately enough (which is easy) and how to ensure that~$S$ is a set
of the right size (which seems to require a small trick because
$\Delta_x$ as defined above is not computable exactly, and so we actually use a different quantity for $\Delta_x$).  We think that
the second is not really required for the applications later, but it
simplifies the statement of the above theorem and makes it somewhat
more intuitive.

\begin{proof}
  We describe algorithm $\Gen$.  First, obtain an estimate
  \begin{align}
    \Delta :\approx \Pr_{r,x}[C(x,P(x), r)=1] -
    \Pr_{r,x}[C(x,1-P(x), r)=1]
  \end{align}
  such that almost surely $\Delta$ is within~$\epsilon/4$ of the
  actual quantity.  If~$|\Delta| < 3\epsilon/4$, we can return~$\delta
  = 1$, $S = \{0,1\}^n$, and a circuit~$Q$ which guesses a uniform
  random bit.  If~$\Delta < -3\epsilon/4$ replace $C$ with the circuit
  which outputs~$1-C$ in the following argument.  Thus, from now on
  assume~$\Delta > 3\epsilon/4$ and that the actual quantity is at
  least~$\epsilon/2$.

  Sample random strings~$r_1,\ldots,r_m$ for $C$, where~$m =
  100n/\epsilon^2$, and let~$C'(x,b,i)$ be the circuit which computes
  $C(x,b,r_i)$.  Using a Chernoff bound, we see that for all $x \in \{0,1\}^n$
  \begin{align}
    \Pr_{r}[C&(x,P(x), r)=1] -
    \Pr_r[C(x,1-P(x), r)]=1]= \nonumber\\
    & \Pr_{i \in [m]}[C'(x,P(x), i)=1] - \Pr_{i \in [m]}[C'(x,1-P(x), i)]=1]
    \pm \epsilon/4
  \end{align}
  almost surely.

  Define, for any~$x$,
  \begin{align}\label{eq:4}
    \Delta_x := \Pr_{i \in [m]}[C'(x,P(x), r_i)=1] - \Pr_{i \in [m]}[C'(x,1-P(x), r_i)=1].
  \end{align}
  Because we define~$\Delta_x$ using~$C'$ instead of $C$, we can
  compute~$\Delta_x$ exactly for a given~$x$.  Now, order the~$x$
  according to $\Delta_x$: let~$x_1 \preceq x_2$ if $\Delta_{x_1} <
  \Delta_{x_2}$, or both~$\Delta_{x_1} = \Delta_{x_2}$ and~$x_1 \leq_L
  x_2$, where~$\leq_L$ is the lexicographic ordering on bitstrings.
  We can compute~$x_1 \preceq x_2$ efficiently given $(x_1,P(x_1))$
  and $(x_2, P(x_2))$.

  We claim that we can find~$x^*$ such that almost surely (we
  assume~$\epsilon > 10\cdot 2^{-n}$, otherwise we can get the theorem
  with exhaustive search)
  \begin{align}\label{eq:3}
    \frac{\epsilon}{20} \leq \frac{1}{2^n}
    \sum_{x\preceq x^*} \Delta_x \leq \frac{\epsilon}{10}.
  \end{align}
  We pick~$50 n/\epsilon$ candidates, then almost surely one of them
  satisfies~(\ref{eq:3}) with a safety margin of~$\epsilon/50$.  For
  each of those candidates we estimate $\frac{1}{2^n} \sum_{x\preceq
    x^*} \Delta_x$ up to an error of~$\epsilon/100$, and keep one for
  which almost surely~(\ref{eq:3}) is satisfied.  We let $S(x,P(x))$
  be the circuit which recognizes the set~$S^* := \{x | x \preceq
  x^*\}$, estimate~$\delta' := |S^*|/2^n$ almost surely within an
  error of~$\epsilon/1000$, and output~$\delta := \delta' -
  \epsilon/1000$.  The situation at this moment is illustrated in
  Figure~\ref{fig:littlepicture}, and it is clear that the properties
  ``large set'' and ``indistinguishability'' are satisfied.

  We next describe~$Q$.  On input $x$, $Q$ calculates (exactly)
  \begin{align}
    \Pr_{i \in [m]}[C'(x,1, i) = 1] - \Pr_{i \in [m]}[C'(x,0, i) = 1] = (2P(x)-1)\Delta_x\;.
  \end{align}
  If $(2P(x) - 1)\Delta_x \geq \Delta_{x^*}$ (where~$\Delta_{x^*}$ is
  defined by~(\ref{eq:4}) for the element~$x^*$ which defines~$S$),
  then output~$1$, if $(2P(x) - 1)\Delta_x \leq - \Delta_{x^*}$ output
  $0$.  If neither of the previous cases apply, output~$1$ with
  probability $\frac{1}{2}(1+\frac{(2P(x) -
    1)\Delta_x}{\Delta_{x^*}})$.

  To analyze the success probability of $Q$, we distinguish two cases.
  If~$x\notin S$, we know that~$\Delta_{x} \geq \Delta_{x^*}$.
  Therefore, in this case, we get the correct answer with probability
  $1$.  If~$x \in S$, it is also easy to check that this will give the
  correct answer with probability
  $\max\{\frac{1}{2}(1+\frac{\Delta_x}{\Delta_{x^*}}), 0\}$, and thus,
  on average $\frac{1}{|S|}\sum_{x \in S}
  \max\{\frac{1}{2}(1+\frac{\Delta_x}{\Delta_{x^*}}), 0\} \geq
  \frac{1}{|S|}\sum_{x \in S}
  \frac{1}{2}(1+\frac{\Delta_x}{\Delta_{x^*}}) \geq
  \frac{\epsilon}{20}$, using~(\ref{eq:3}).  In total, we have
  probability at least $\mu(S)(\frac{1}{2}+\frac{\epsilon}{40}) +
  (1-\mu(S))$ of answering correctly.  Since $\mu(S) \geq \delta$,
  this quantity is at least $1 - \frac{\delta}{2}$, which implies
  ``predictability''.  It is possible to make $Q$ deterministic by
  trying all possible values for the randomness and estimating the
  probability of it being correct.

  In order to get the additional property 1, we first run the
  above algorithm with input $\epsilon/3$ instead of $\epsilon$.  If
  $\delta > 1 - 2\epsilon/3$, we instead output the set containing all
  elements and return $1$ in place of $\delta$.  Note that
  indistinguishability still holds because we only add a fraction of
  $2\epsilon/3$ elements to $S$.  If $\delta \leq 1 - 2\epsilon/3$, we
  enlarge $S$ by at least $\epsilon/2$ and at most $2\epsilon/3$; this
  can be done by finding a new candidate for $x^*$ as above.  We then
  output the new set and $\delta' := \delta + \frac{\epsilon}{2}$.

  The additional properties 2, 3 and 4 follow by inspection of the proof.
\end{proof}

\subsection{Multiple instances}
\subsubsection{Informal Discussion}
We explain our idea on an example: suppose we want to prove Yao's
XOR-Lemma.  Thus, we are given a predicate $P: \{0,1\}^n \rightarrow
\{0,1\}$ which is somewhat hard to compute, i.e., $\Pr[C^{(1)}(x) =
P(x)] < 1-\frac{\delta}{2}$ for any circuit $C^{(1)}$ coming from some
family of circuits (the superscript $(1)$ should indicate that this is
a circuit operating on a single instance).  We want to show that any
circuit $C^{(\oplus k)}$ from a related family predicts
$P(x_1)\oplus\dots\oplus P(x_k)$ from $(x_1,\ldots,x_k)$ correctly
with probability very close to $\frac{1}{2}$, and aiming for a
contradiction we now assume that a circuit $C^{(\oplus k)}$ exists which does
significantly better than this is given.

As a first step, we transform $C^{(\oplus k)}$ into a circuit
$C^{(k)}(x_1,b_1,x_2,b_2,\ldots,x_k,b_k)$ as follows: $C^{(k)}$
invokes $C^{(\oplus k)}(x_1,\ldots,x_k)$ and outputs $1$ if the result
equals $b_1 \oplus \dots \oplus b_k$, otherwise it outputs $0$.  We
see that we would like to show
$\Pr[C^{(k)}(x_1,P(x_1),\ldots,x_{k},P(x_k))=1] \approx \frac{1}{2}$.

Here is the key idea: we apply Theorem~\ref{'theorem single pred'}
sequentially on every position $i$ of $C^{(k)}$.  Done properly, in
each position one of the following happens: (a) we
can use $C^{(k)}$ to predict~$P(x)$ from~$x$ with probability at least
$1-\frac{\delta}{2}$, or (b) we find a large set $S^*_i$ such that if
$x_i \in S^*_i$, $C^{(k)}$ behaves roughly the same in case $b_i$
equals $P(x_i)$ and in case $b_i$ is a uniform random bit.  If (a)
happens at any point we get a contradiction and are done, so consider
the case that (b) happens $k$ times.  Recall now how $C^{(k)}$ was
built from $C^{(\oplus k)}$: it compares the output of $C^{(\oplus
  k)}$ to $b_1 \oplus \dots \oplus b_k$.  If $x_i$ lands in the large
set for any $i$ we can assume that $b_i$ is a random bit (and it is
very unlikely that this happens for no $i$).  Then, $C^{(k)}$ outputs
$1$ exactly if $C^{(\oplus k)}$ correctly predicts a uniform random
bit which is independent of the input to $C^{(\oplus k)}$.  The
probability such a prediction is correct is exactly~$\frac{1}{2}$, and
overall we get that $C^{(\oplus k)}$ is correct with probability close
to $\frac{1}{2}$.

\iffull
The theorem gives the formal statement for $C^{(k)}$, we later do the
transformation to $C^{(\oplus k)}$ as an example.
\fi
\ifshort
The theorem gives the formal statement for $C^{(k)}$, in the appendix the
transformation to $C^{(\oplus k)}$ is done as an example.
\fi

\subsubsection{The Theorem}
Fix a predicate $P: \{0,1\}^n \to \{0,1\}$ and a boolean circuit
$C^{(k)}(x_1,b_1,\ldots,x_k,b_k)$.  We are interested in the
probability that the circuit outputs $1$ in the following
Experiment~$1$:

\begin{lstlisting}
*\textbf{Experiment 1:}*
        $\forall i \in \{1,\ldots,k\}: x_i \leftarrow \{0,1\}^n$
        $\forall i\in \{1,\ldots,k\}: b_i := P(x_i)$
        $r \leftarrow \{0,1\}^*$
        output $C^{(k)}(x_1,b_1,\ldots,x_k,b_k,r)$
\end{lstlisting}

We will claim that there are large sets $S^*_1, \ldots, S^*_k$ with
the property that for any $x_i$ which falls into $S^*_i$, we can set
$b_i$ to a random bit and the probability of the experiment producing
a $1$ will not change much.  However, we will allow the sets $S^*_i$
to depend on the $x_j$ and $b_j$ for $j < i$; we therefore assume that
an algorithm $\GenS$ is given which produces such a set on input
$t_i = (x_1,b_1,\ldots,x_{i-1},b_{i-1})$.

\begin{lstlisting}
*\textbf{Experiment 2:}*
        for $i := 1$ to $k$ do
           $t_i := (x_1,b_1,\ldots,x_{i-1}, b_{i-1})$
           $S^*_i := \GenS(t_i)$
           $x_i \leftarrow \{0,1\}^n$
           if $x_i \in S^*_{i}$ then $b_i \leftarrow \{0,1\}$ else $b_i := P(x_i)$ fi
        end for
        $r \leftarrow \{0,1\}^*$
        output $C^{(k)}(x_1,b_1,\ldots,x_k,b_k,r)$
\end{lstlisting}

Theorem~\ref{'theorem many pred'} essentially states the following:
assume no small circuit can predict $P(x)$ from $x$ with probability
$1-\frac{\delta}{2}$.  For any fixed circuit~$C^{(k)}$, any
$\epsilon$, and any $k$ there is an algorithm $\GenS$ which produces
sets $S_i^*$ with $\mu(S^*_i) \geq \delta$ and such that the
probability that Experiment~1 outputs~1 differs by at most $\epsilon$
from the probability that Experiment~2 outputs~1.

\begin{theorem} \label{'theorem many pred'} Let~$P$ be a computable
  predicate, $k, \frac{1}{\epsilon} \in \poly(n)$ parameters.  There are
  two algorithms $\Gen$ and $\GenS$ as follows: $\Gen$ takes as input
  a randomized circuit $C^{(k)}$ and a parameter $\epsilon$ and
  outputs a deterministic circuit $Q$ of size
  $\size(C^{(k)})\cdot\poly(n)$ as well as
  $\delta \in [0,1]$.  $\GenS$ takes as input a circuit $C^{(k)}$, a
  tuple $t_i$, and a parameter $\epsilon$ and outputs a deterministic
  circuit $S_{t_i}(x,b)$ of
  $\size(C^{(k)})\cdot\poly(n)$.  After a run of
  $\Gen$, almost surely the following properties are satisfied:
  \begin{description}
  \item[Large Sets:] For any value of $t_i :=
    (x_1,b_1,\ldots,x_{i-1},b_{i-1})$ the circuit $S_{t_i}(x_{i},
    P(x_i))$ recognizes a set $S^*_{i} := \{x_i | S(t_i,x_i,
    P(x_i))=1\}$.  The probability that in an execution of
    Experiment~2 we have~$\mu(S^*_i) < \delta$ for any of the~$S^*_i$ which
    occur is at most~$\epsilon$.
  \item[Indistinguishability:] Using sets $S_{t_i}^*$ as above in
    Experiment 2 gives
    \begin{align}\label{eq:6}
      \bigl|\Pr[\text{Experiment 1 outputs 1}] - \Pr[\text{Experiment
        2 outputs 1}\bigr| \leq \epsilon.
    \end{align}
  \item [Predictability:] $Q$  predicts $P$ well: $
    \displaystyle\Pr_{x\leftarrow\{0,1\}^n}[Q(x) = P(x)] \geq 1-\frac{\delta}{2}$.
  \end{description}
  Additionally, these algorithms have the following
  properties:
  \begin{enumerate}
    \ifshort\setlength{\itemsep}{-5pt}\fi
  \item Unless $\delta = 1$ algorithm $Q$ predicts slightly
    better: $\Pr[Q(x) = P(x)] \geq 1-\frac{\delta}{2} +
    \frac{\epsilon}{16k}$.
  \item If $P$ is efficiently samplable (i.e., pairs $(x,P(x))$ can be
    generated in polynomial time), $\Gen$ and $\GenS$ run in time
    $\poly(n)$.
  \item $\Gen$, $\GenS$, $S_{t_i}$, and $Q$ can be implemented with
    oracle access to $C$ only (i.e., they don't use the description of
    $C$).
  \item When thought of as oracle circuits, $S_{t_i}$ and $Q$ use the
    oracle $C$ at most $\cO(\frac{k^2n}{\epsilon^2})$ times.  Also, they
    both treat $x$ obliviously and are non-rewinding.  Finally, their
    output only depends on the number of $1$'s obtained from the
    oracle calls to $C$ and, in case of $S_{t_i}$, the input $P(x)$.
  \end{enumerate}
\end{theorem}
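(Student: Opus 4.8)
The plan is to deduce Theorem~\ref{'theorem many pred'} from $k$ applications of Theorem~\ref{'theorem single pred'}, one per coordinate, stitched together by a hybrid argument. For a position $i$ and a prefix $t_i=(x_1,b_1,\ldots,x_{i-1},b_{i-1})$, let $C_{t_i}(x,b,r')$ be the single-instance circuit that hard-wires $t_i$, parses its randomness string $r'$ as a sequence of outputs of the sampler for $P$ (producing fresh pairs $(x_{i+1},b_{i+1}),\ldots,(x_k,b_k)$ with $b_j=P(x_j)$) followed by randomness $r$ for $C^{(k)}$, places $(x,b)$ in coordinate $i$, and returns $C^{(k)}(x_1,b_1,\ldots,x_k,b_k,r)$. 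Then $\size(C_{t_i})=\size(C^{(k)})\cdot\poly(n)$ (using the efficient sampler for $P$), and one evaluation of $C_{t_i}$ costs one oracle call to $C^{(k)}$. Define $\GenS(t_i)$ to run the algorithm $\Gen$ of Theorem~\ref{'theorem single pred'} on $C_{t_i}$ with parameter $\epsilon':=\epsilon/(2k)$; it outputs the circuit $S_{t_i}$ produced by that run, and internally it also obtains a value $\delta_{t_i}$ and a single-instance predictor $Q_{t_i}$, which we will use below. The size bounds, the oracle-call count $\cO(n/(\epsilon')^2)=\cO(k^2n/\epsilon^2)$, obliviousness, and non-rewinding for $S_{t_i}$ and for $Q$ are inherited coordinate-wise from Theorem~\ref{'theorem single pred'}; non-rewinding holds because $C_{t_i}$ (hence $S_{t_i}$, $Q_{t_i}$) hard-wires the instances $x_1^*,\ldots,x_{i-1}^*$ and the bits $b_1^*,\ldots,b_{i-1}^*$ into every oracle call to $C^{(k)}$.

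For $0\le i\le k$ let $H_i$ denote the experiment that treats coordinates $1,\ldots,i$ as in Experiment~2 (re-randomizing $b_j$ whenever $x_j\in S^*_j$) and coordinates $i+1,\ldots,k$ as in Experiment~1 ($b_j=P(x_j)$), so that $H_0$ is Experiment~1 and $H_k$ is Experiment~2. Since $H_{i-1}$ and $H_i$ agree on coordinates $1,\ldots,i-1$, the prefix $t_i$ is identically distributed in the two, and conditioned on $t_i$ the only difference is coordinate $i$; because re-randomizing $b_i$ on $S^*_i$ yields exactly the average of the distributions ``$b_i=P(x_i)$'' and ``$b_i=P(x_i)\oplus S_{t_i}(x_i)=:P'(x_i)$'' (these agree off $S^*_i$), a one-line calculation gives
\begin{align*}
  \bigl|\Pr[H_{i-1}=1\mid t_i]-\Pr[H_i=1\mid t_i]\bigr|
  &=\tfrac12\bigl|\Pr_{x,r'}[C_{t_i}(x,P(x),r')=1]-\Pr_{x,r'}[C_{t_i}(x,P'(x),r')=1]\bigr| \\
  &\le\tfrac{\epsilon'}{2},
\end{align*}
the last inequality being the Indistinguishability guarantee of Theorem~\ref{'theorem single pred'} for $C_{t_i}$ (which holds almost surely for that run). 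Averaging over $t_i$ and summing $i=1,\ldots,k$ gives $|\Pr[H_0=1]-\Pr[H_k=1]|\le k\epsilon'/2=\epsilon/4<\epsilon$, i.e.\ inequality~(\ref{eq:6}). The only subtlety is the ``almost surely'': each execution of Experiment~2 invokes $\GenS$ only $k$ times, so a union bound over those $k$ runs keeps the failure probability at $k\cdot 2^{-n}\poly(n)$, still negligible — no union bound over exponentially many prefixes is needed.

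It remains to produce $\delta$ and $Q$. Running the honest Experiment~2 (with the $S^*_i$ above) defines a random variable $\delta_{\min}:=\min_i\delta_{t_i}$, the minimum of the single-instance densities over the $k$ occurring prefixes. On a grid of spacing $\eta:=\epsilon/(8k)$, let $\delta$ be the largest grid point with $\Pr[\delta_{\min}<\delta]\le\epsilon/2$ (estimated from enough simulated runs so that the true probability is at most $\epsilon$). Since an occurring $S^*_i$ has density at least $\delta_{t_i}\ge\delta_{\min}$, this yields the Large Sets property. For Predictability, maximality forces $\Pr[\delta_{\min}<\delta+\eta]>\epsilon/2$, hence there is a coordinate $i^*$ such that with probability greater than $\epsilon/(2k)$ the occurring prefix $t_{i^*}$ satisfies $\delta_{t_{i^*}}<\delta+\eta$; for every such prefix, property~1 of Theorem~\ref{'theorem single pred'} (applied to $C_{t_{i^*}}$ with parameter $\epsilon'$) gives $\Pr_x[Q_{t_{i^*}}(x)=P(x)]\ge 1-\delta_{t_{i^*}}/2+\epsilon'/4>1-\delta/2-\eta/2+\epsilon'/4=1-\delta/2+\epsilon/(16k)$. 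Since $P$ is computable, candidate predictors can be tested: $\Gen$ samples $\poly(n,1/\epsilon)$ prefixes for each coordinate $i$ (by simulating the first $i-1$ steps of Experiment~2), forms the associated $Q_{t_i}$, estimates each $\Pr_x[Q_{t_i}(x)=P(x)]$ from labelled pairs $(x,P(x))$, and outputs as $Q$ the best candidate together with the $\delta$ above; the ``good'' prefixes for coordinate $i^*$ have probability $>\epsilon/(2k)=1/\poly(n)$, so polynomially many samples hit one with overwhelming probability, and the returned $Q$ achieves $\Pr_x[Q(x)=P(x)]\ge 1-\delta/2+\epsilon/(16k)$ up to an arbitrarily small estimation error — this is property~1, and the bound $1-\delta/2$ follows a fortiori. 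Properties~2--4 are then routine: efficiency from the sampler for $P$ and $k,1/\epsilon\in\poly(n)$; oracle-only access, the oracle-call bound $\cO(k^2n/\epsilon^2)$, obliviousness and non-rewinding inherited from Theorem~\ref{'theorem single pred'} (each call to a $C_{t_i}$ being one call to $C^{(k)}$).

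The step I expect to be delicate is the choice of $\delta$ and the matching of constants in Predictability: the quantile definition of $\delta$ only guarantees an \emph{infinitesimal} gap between $\Pr[\delta_{\min}<\delta]$ and $\Pr[\delta_{\min}<\delta+\eta]$, so the entire $+\epsilon/(16k)$ advantage of $Q$ must come from the $\epsilon'/4$ slack in property~1 of Theorem~\ref{'theorem single pred'} surviving the $-\eta/2$ loss; this forces a careful relation such as $\epsilon'=\epsilon/(2k)$, $\eta=\epsilon/(8k)$ and clean bookkeeping of every sampling error, including the estimate of the distribution of $\delta_{\min}$. A secondary, more bureaucratic point is making the per-run ``almost surely'' guarantees compose over the (at most $k$ per execution) calls to $\GenS$ without ever needing a union bound over all $2^{\cO(kn)}$ prefixes, and ensuring that $\Gen$ — which must commit to $\delta$ and to $Q$ before seeing any $t_i$ — samples prefixes from precisely the distribution in which they arise in Experiment~2.
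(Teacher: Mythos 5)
Your overall architecture is the paper's: reduce to Theorem~\ref{'theorem single pred'} coordinate-wise via the circuit $C_{t_i}$ with single-instance parameter $\Theta(\epsilon/k)$, prove indistinguishability by the hybrid $H_0,\ldots,H_k$ (including the observation that a uniform $b_i$ on $S_i^*$ is the average of the $P$ and $P'$ cases, so each hybrid step costs half the single-instance distinguishing advantage), and choose $\delta$ by a quantile argument over the distribution of the $\delta_{t_i}$ arising in Experiment~2. Where you genuinely diverge is in how $Q$ and $\delta$ are produced. The paper runs Theorem~\ref{'theorem single pred'}'s $\Gen$ on $nk/\epsilon$ randomly sampled prefixes and outputs the $(Q,\delta)$ \emph{pair from the single run attaining the minimal $\delta$}; because that one run certifies both values simultaneously, predictability with the $+\epsilon/(16k)$ bonus is immediate from property~1 of Theorem~\ref{'theorem single pred'} (with parameter $\epsilon/4k$, giving exactly $\epsilon/(16k)$), and only the Large Sets property needs the quantile argument. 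You instead define $\delta$ as a grid quantile of $\delta_{\min}$ independently of any predictor, and must then separately locate and empirically validate a good $Q_{t_{i^*}}$ by testing candidates against labelled pairs $(x,P(x))$.

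This decoupling is where your argument has a real, though fixable, deficiency, which you partly flag yourself: with $\epsilon'=\epsilon/(2k)$ and $\eta=\epsilon/(8k)$ the chain $1-\delta_{t_{i^*}}/2+\epsilon'/4>1-\delta/2-\eta/2+\epsilon'/4=1-\delta/2+\epsilon/(16k)$ lands exactly on the target with zero quantitative margin, so the ``arbitrarily small estimation error'' in selecting the best candidate $Q$ (and in estimating the quantile) cannot be absorbed; you must retune the constants (e.g.\ a finer grid or a larger $\epsilon'$) to leave genuine slack. A second corner case you do not treat: property~1 of Theorem~\ref{'theorem single pred'} is vacuous when $\delta_{t_{i^*}}=1$, and your condition $\delta_{t_{i^*}}<\delta+\eta$ does not exclude $\delta_{t_{i^*}}=1$ when $\delta+\eta>1$, so you need to route that case to the output $\delta=1$ (for which property~1 of the present theorem is also vacuous). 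The paper's coupling of $Q$ and $\delta$ sidesteps both issues at no cost; your version buys nothing in generality in exchange. Everything else --- the hybrid bookkeeping, the treatment of ``almost surely'' over only the $k$ invocations per execution rather than all prefixes, non-rewinding and obliviousness via hard-wiring $t_i$, and the oracle-call count --- matches the paper and is correct.
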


\ifshort The proof is given in the appendix, but follows the
informal discussion above.  \fi

\iffull
\begin{proof}
  For any fixed tuple $t_i = (x_1,b_1,\ldots,x_{i-1},b_{i-1})$,
  consider the circuit~$C_{t_i}(x,b,r)$ which uses $r$ to pick
  random~$x_j$ for~$j > i$, and runs~$C^{(k)}(t_i,x,b, x_{i+1},
  P(x_{i+1}),\ldots,x_{k},P(x_k))$.\footnote{Formally, $C_{t_i}$ may
    not be a small circuit because at this point we do not assume $P$
    to be efficiently samplable, and $C_{t_i}$ seems to need to use
    $r$ to sample pairs $(x_j, P(x_j))$ for $j > i$.  However, we can
    think of $C_{t_i}$ as oracle circuit with oracle access to $P$ at
    this moment.  Inspection of the previous proof shows that later we
    can remove the calls to $P$, as the $x_{j}$ with $j > i$ can be
    fixed.}  We let $\GenS$ be the algorithm which invokes $\Gen$ with
  parameter $\frac{\epsilon}{4k}$ from Theorem~\ref{'theorem single
    pred'} on the circuit~$C_{t_i}$ and then returns the circuit
  recognizing a set from there.

  We next describe $\Gen$: For~$\ell = n k/\epsilon$ iterations, pick
  a random~$i \in \{0, \ldots, k-1\}$, use the procedure in
  Experiment~2 until loop $i$, and run algorithm $\Gen$ from
  Theorem~\ref{'theorem single pred'} with parameter
  $\frac{\epsilon}{4k}$.  This yields a parameter $\delta$ and a
  circuit $Q$.  We output the pair~$(Q,\delta)$ for the
  smallest~$\delta$ ever encountered.  Since $k$ and $\epsilon$ are
  polynomial in $n$, almost surely every time Theorem~\ref{'theorem
    single pred'} is used the almost surely part happens.  Thus, we
  get the property ``predictability'' (and in fact the stronger
  property listed under additionally) immediately.  We now argue
  ``large sets'': consider the random variable $\delta$ when we pick a
  random $i$, simulate an execution up to iteration $i$ of
  Experiment~2, then run $\Gen$ from Theorem~\ref{'theorem single
    pred'}.  Let $\delta^*$ be the $\frac{\epsilon}{k}$-quantile of
  this distribution, i.e., the smallest value such that with
  probability $\frac{\epsilon}{k}$ the value of $\delta$ is at most
  $\delta^*$.  The probability that a value not bigger than $\delta^*$
  is output by $\Gen$ is at least $1-(1-\frac{\epsilon}{k})^\ell > 1 -
  2^{-n}$, in which case ``large sets'' is satisfied.

  We show ``indistinguishability'' with a standard hybrid argument.
  Consider the Experiment~$H_j$:
\begin{lstlisting}
*\textbf{Random Experiment $H_j$:}*
        for $i := 1$ to $k$ do
           $t_i := (x_1,b_1,\ldots,t_{i-1}, b_{i-1})$
           $S_{i}^* := \GenS(t_i)$
           $x_i \leftarrow \{0,1\}^n$
           if $i \leq j$ and $x_i \in S^*_t$ then
              $b_i \leftarrow \{0,1\}$
           else
              $b_i := P(x_i)$
           end if
        end for
        $r \leftarrow \{0,1\}^*$
        output $C^{(k)}(x_1,b_1,\ldots,x_k,b_k,r)$
\end{lstlisting}

Experiment~$H_0$ is equivalent to Experiment~$1$, Experiment $H_{k}$
is the same as Experiment~$2$.  Applying Theorem~\ref{'theorem single
  pred'} we get that for every fixed~$x_1,\ldots,x_{j-1},
b_1,\ldots,b_{j-1}$, almost surely
\begin{align}
  \Bigl|
  \Pr_{x_i,\ldots,x_k}[&C^{(k)}(x_1,b_1,\ldots,x_{j-1},b_{j-1},x_j,b_j^{(j-1)},\ldots,x_k,P(x_k)) = 1] -\nonumber\\
  \Pr_{x_i,\ldots,x_k}[&C^{(k)}(x_1,b_1,\ldots,x_{j-1},b_{j-1},x_j,b_j^{(j)},\ldots,x_k,P(x_k)) = 1] \Bigr|
  \leq \epsilon/4k\;,
\end{align}
where $b_j^{(j-1)}$ is chosen as $b_j^{(j-1)} = P(x_j)$ in experiment
$H_{j-1}$, and $b_{j}^{(j)}$ is chosen the same way as $b_j$ is chosen
in experiment $H_{j}$ (in Theorem~\ref{'theorem single pred'} the bit
is flipped, but when using a uniform bit instead of flipping it the
distinguishing probability only gets smaller).  Applying the triangle
inequality $k-1$ times we get that almost surely the difference of the
probabilities in Experiment~1 and Experiment~2 is at most
$\frac{\epsilon}{2}$.  Since ``almost surely'' means with
probabilities $1-2^{-n}\poly(n) > 1-\frac{\epsilon}{2}$, we get
``indistinguishability''.

We already showed the additional Property 1.  Properties 2,3, and 4
follow by inspection.
\end{proof}
\fi

\iffull
\subsection{Example: Yao's XOR-Lemma}

As a first example, we prove Yao's XOR-Lemma from
Theorem~\ref{'theorem many pred'}.  We will give the proof for the
non-uniform model, but in fact it would also work in the uniform model
of computation.
\begin{theorem}[Yao's XOR-Lemma]
  Let $P: \{0,1\}^n \to \{0,1\}$ be a predicate, such that for all
  circuits~$Q$ of size at most~$s$:
  \begin{align}\label{eq:14}
    \Pr[Q(x) = P(x)] < 1-\frac{\delta'}{2}.
  \end{align}
  Then, for all circuits of size $s/\poly(n, k, \frac{1}{\epsilon'})$:
  \begin{align}\label{eq:11}
    \Pr[C^{(\oplus k)}(x_1,\ldots,x_k) = P(x_1)\oplus \dots \oplus P(x_k)] \leq
    \frac{1}{2} + (1-\delta')^k + \epsilon'.
  \end{align}
\end{theorem}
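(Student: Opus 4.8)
The plan is to argue by contradiction, reducing to the single-instance hardness of $P$ via Theorem~\ref{'theorem many pred'}. Suppose some circuit $C^{(\oplus k)}$ of size $s/\poly(n,k,\tfrac1{\epsilon'})$ violates~(\ref{eq:11}). First transform it into the circuit $C^{(k)}(x_1,b_1,\ldots,x_k,b_k)$ that runs $C^{(\oplus k)}(x_1,\ldots,x_k)$ and outputs $1$ iff the answer equals $b_1\oplus\cdots\oplus b_k$; then $\size(C^{(k)})=\size(C^{(\oplus k)})+\cO(k)$ and $\Pr[\text{Experiment 1 outputs }1]$ equals exactly the left-hand side of~(\ref{eq:11}), hence is $>\tfrac12+(1-\delta')^k+\epsilon'$. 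Apply Theorem~\ref{'theorem many pred'} to $C^{(k)}$ with parameter $\epsilon:=\epsilon'/3$ (using $k,\tfrac1\epsilon\in\poly(n)$), obtaining $\delta$, a predictor $Q$ of size $\size(C^{(k)})\cdot\poly(n)$, and $\GenS$; by fixing the polynomially many coins of $\Gen$ we may assume ``large sets'', ``indistinguishability'' and ``predictability'' all hold. Taking the polynomial in the statement of the lemma large enough makes $\size(Q)\le s$, so~(\ref{eq:14}) together with ``predictability'' gives $1-\tfrac\delta2\le\Pr[Q(x)=P(x)]<1-\tfrac{\delta'}2$, i.e.\ $\delta>\delta'$. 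Writing $E$ for the event that in Experiment~2 some $x_i$ lands in its set $S^*_i$, revealing $x_1,\ldots,x_k$ one at a time and using ``large sets'' (with probability $\ge1-\epsilon$ every occurring $S^*_i$ has density $\ge\delta$) a product estimate yields $\Pr[\lnot E]\le(1-\delta)^k+\epsilon\le(1-\delta')^k+\epsilon$.

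The crux is the bound $\Pr[\text{Experiment 2 outputs }1]\le\tfrac12+\Pr[\lnot E]$. Granting this, ``indistinguishability'' finishes the proof: $\tfrac12+(1-\delta')^k+\epsilon'<\Pr[\text{Experiment 1}]\le\Pr[\text{Experiment 2}]+\epsilon\le(\tfrac12+\Pr[\lnot E])+\epsilon\le\tfrac12+(1-\delta')^k+2\epsilon=\tfrac12+(1-\delta')^k+\tfrac23\epsilon'$, a contradiction, so no such $C^{(\oplus k)}$ exists. To prove the bound, condition on $E$ and let $\ell$ be the smallest index with $x_\ell\in S^*_\ell$; in Experiment~2 the bit $b_\ell$ is then a uniform coin. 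Because $C^{(k)}$ touches the $b_j$'s only through the XOR $b_1\oplus\cdots\oplus b_k$, if $b_\ell$ entered nothing else then $b_1\oplus\cdots\oplus b_k$ would be uniform and independent of $C^{(\oplus k)}(x_1,\ldots,x_k)$, so Experiment~2 would accept with probability exactly $\tfrac12$ given $E$; averaging with $\lnot E$ gives the bound.

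The claim that ``$b_\ell$ enters nothing else'' is the step I expect to be the real obstacle: $S^*_i=\GenS(x_1,b_1,\ldots,x_{i-1},b_{i-1})$ formally depends on the (possibly randomized) earlier bits, and one can exhibit a $\GenS$ satisfying ``large sets'' for which $b_1\oplus\cdots\oplus b_k$ is noticeably biased conditioned on $E$. The resolution exploits both the shape of $C^{(k)}$ and the construction of $\GenS$ inside the proof of Theorem~\ref{'theorem many pred'}: the single-instance circuit $C_{t_i}$ that $\GenS$ feeds to Theorem~\ref{'theorem single pred'} depends on $b_1,\ldots,b_{i-1}$ only through their XOR, and flipping that parity merely replaces $C_{t_i}$ by the same circuit with its bit-input complemented --- a change the algorithm of Theorem~\ref{'theorem single pred'} undoes, since it negates its input circuit whenever the measured bias is negative, and hence returns the \emph{same} set. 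Thus every $S^*_i$ depends only on $x_1,\ldots,x_{i-1}$, the active set $\{i:x_i\in S^*_i\}$ is a deterministic function of $(x_1,\ldots,x_k)$, the randomized bit $b_\ell$ is genuinely free, and the argument above goes through. (Two technical points to discharge: $\GenS$ should be made deterministic, e.g.\ by non-uniform derandomization, so that ``returns the same set'' is literal; and the boundary behaviour of the $\tfrac{3\epsilon}4$ threshold in Theorem~\ref{'theorem single pred'} must be checked, its effect being absorbable into the error terms.)
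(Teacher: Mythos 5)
Your proposal follows the paper's route exactly: build $C^{(k)}$ from $C^{(\oplus k)}$, apply Theorem~\ref{'theorem many pred'}, use ``predictability'' together with~(\ref{eq:14}) to force $\delta>\delta'$, and then bound the acceptance probability of Experiment~2 by $\frac{1}{2}$ plus the probability that no $x_i$ lands in its set. The one place where you diverge is the crux, and you are right to flag it: the paper simply asserts the inequality $\Pr[\text{$C$ outputs $1$ in Exp.~2 and all $S^*_i$ have density}\geq\delta]\leq\frac{1}{2}+(1-\delta)^k$, justified only by the informal remark that once some $x_i$ lands in its set, $b_i$ is a uniform bit independent of the input to $C^{(\oplus k)}$. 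That step does \emph{not} follow from the black-box guarantees of Theorem~\ref{'theorem many pred'} alone, for exactly the reason you give: $S^*_j=\GenS(\ldots,b_{j-1})$ may depend on the earlier $b$'s, so the XOR $b_1\oplus\cdots\oplus b_k$ need not be unbiased. A concrete witness: take $k=2$, $S^*_1=\{0,1\}^n$, and $\GenS(x_1,b_1)=A_{b_1}$ with $\mu(A_0)=\mu(A_1)=\delta>\frac{1}{2}$ and $A_0\cup A_1=\{0,1\}^n$; let $C^{(\oplus 2)}(x_1,x_2)$ equal $P(x_2)$ for $x_2\notin A_0$ and $1\oplus P(x_2)$ for $x_2\notin A_1$. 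Every set occurring in Experiment~2 has density at least $\delta$, yet the experiment accepts with probability $1-\frac{\delta}{2}$, which exceeds $\frac{1}{2}+(1-\delta)^2$ whenever $\delta>\frac{1}{2}$. So the gap is genuine (in the paper's write-up, not in the theorem), and some additional property of the actual $\GenS$ must be invoked.

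Your resolution is the right one and it is correct: for the XOR-shaped $C^{(k)}$, the single-instance circuit $C_{t_i}$ built in the proof of Theorem~\ref{'theorem many pred'} depends on $b_1,\ldots,b_{i-1}$ only through their parity $\beta$, and flipping $\beta$ replaces $C_{t_i}$ by $1-C_{t_i}$; the algorithm of Theorem~\ref{'theorem single pred'} begins by negating its input circuit whenever the estimated bias $\Delta$ is negative (and outputs the trivial set when $|\Delta|$ is small, symmetrically in $C$ versus $1-C$), after which the two runs coincide. Hence, once the coins of $\GenS$ are shared across the two parities --- most cleanly, simply define $\GenS$ to run on $\beta=0$ always, which produces an identically distributed set and preserves all three guarantees --- each $S^*_i$ is a function of $x_1,\ldots,x_{i-1}$ only. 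Then the active index set is determined by $x^{(k)}$, the first active $b_\ell$ is a fresh coin entering nothing else, and your bound $\Pr[\text{Exp.~2 outputs }1]\leq\frac{1}{2}+\Pr[\lnot E]$ holds; the rest of your chain of inequalities is correct. In short: same proof as the paper's, but you have correctly identified and repaired the one step the paper leaves unjustified.
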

\begin{proof}
  Assume a circuit $C^{(\oplus k)}$ which contradicts (\ref{eq:11}) is
  given, we will obtain a circuit $Q$ which contradicts (\ref{eq:14}).
  For this, let $C(x_1,b_1,\ldots,x_k, b_k)$ be the circuit which runs
  $C^{(\oplus k)}(x_1,\ldots,x_k)$ and outputs $1$ if the result is
  the same as $b_1\oplus\dots\oplus b_k$.  We apply
  Theorem~\ref{'theorem many pred'} setting the parameter $\epsilon$
  to $\epsilon'/2$, which produces (among other things) a parameter
  $\delta$.  We assume that the 3 properties which almost surely hold
  do hold (otherwise run $\Gen$ again).  In case
  $\delta < \delta'$, we use $Q$ to get
  a contradiction.  Otherwise, we get
  \begin{align}
    \Pr[C(&x_1,\ldots,x_k) = P(x_1)\oplus \dots \oplus P(x_k)]
    = \Pr[C(x_1,P(x_1),\ldots,x_k,P(x_k)) = 1]\\
    &\leq
    \Pr[\text{$C$ outputs $1$ in Experiment 2}] + \frac{\epsilon'}{2} \\
    &\leq
    \Pr[\text{$C$ outputs $1$ in Experiment 2 and all sets $S_i^*$
      were of density at least $\delta$}] + \epsilon'\\
    & \leq
    \frac{1}{2} + (1-\delta)^k + \epsilon'\;.
  \end{align}
\end{proof}

\subsection{Example: Extraction Lemma (\cite{HILL99})}

Roughly speaking, the construction of a pseudorandom generator from an
arbitrary one-way function proceeds in two steps (see \cite{Holens06b}
for a more detailed description of this view).  First, using the
Goldreich-Levin theorem \cite{GolLev89}, one constructs a
\emph{pseudo-entropy pair}\footnote{While \cite{HILL99} constructs a
  PEP implicitly, the definition and name was introduced in
  \cite{HaHaRe06b}.}  $(f,P)$, which is a pair of functions $f: \{0,
1\}^n \rightarrow \{0, 1\}^{\poly(n)}$, $P: \{0, 1\}^n \rightarrow
\{0, 1\}$ such that for all efficiently computable~$A$,
\begin{align}\label{eq:5}
  \Pr[A(f(x)) = P(x)] \leq 1-\frac{\delta'}{2},
\end{align}
for some non-negligible $\delta'$, and which satisfies some additional
information theoretic property (the information theoretic property
ensures that predicting $P(x)$ from~$f(x)$ is a computational problem,
and (\ref{eq:5}) does not already hold because $f$ is, say, a constant
function).

Second, given independently sampled instances $x_1,\ldots,x_k$, the
extraction lemma then says that extracting $(\delta'-\frac{1}{n})k$
bits from the concatenation $P(x_1)\ldots P(x_k)$ will give a string
which is computationally indistinguishable from a random string.  Due
to the information theoretic property above, once one has the
extraction lemma, it is relatively easy to get a pseudo-random
generator.  In the following we will prove this extraction lemma.

A technicality: the predicate which is hard to predict in this case is
supposed to have input $f(x)$ and output $P(x)$.  However, in reality
this does not have to be a predicate: $f$ is not always injective (in
fact, for $f$ obtained as above it will not be).  Most works avoid
this problem by now stating that previous theorems also hold for
randomized predicates.  This is often true, but some of the statements
get very subtle if one does it this way, and statements which involve
sets of ``hard'' inputs very much so.  We therefore choose to solve
the problem in a different way.  We consider circuits which try to
predict $P(x)$ from $x$, but are limited in that they first are
required to apply $f$ on $x$ and not use $x$ anywhere else.  Now, we
have a predicate again, but it can only be difficult for this
restricted class.  However, since the oracle circuit $Q$ in
Theorem~\ref{'theorem many pred'} treats $x$ obliviously we stay
within this class.

\begin{lemma}[Extraction Lemma, implicit in \cite{HILL99}]
  Let $(f,P)$ be a pair of functions satisfying~(\ref{eq:5}) for any
  polynomial time machine $A$ and set $k = 1/n^3$.  Let~$\Ext(m,s)$ be
  a strong extractor which extracts $m = (\delta'-\frac{1}{n})k$ bits
  from any $k$-bit source with min-entropy~$(\delta' - \frac{1}{2n})
  k$ such that the resulting bits have statistical distance at
  most~$2^{-n}$ from uniform.  Then, for any polynomial time~$A$
  \begin{align}
    \Pr[A(f(x_1),\ldots,f(x_k),s,\Ext(P(x_1)\cdots P(x_k), s)) = 1] -
    \Pr[A(f(x_1),\ldots,f(x_k),s,U_m) = 1]\label{eq:15}
  \end{align}
  is negligible.
\end{lemma}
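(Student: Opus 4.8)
The plan is to assume toward a contradiction that some polynomial-time $A$ makes the quantity in~(\ref{eq:15}) non-negligible; replacing $A$ by $1-A$ if necessary, there is a polynomial $p$ with $\Pr[A(f(x_1),\dots,f(x_k),s,\Ext(P(x_1)\cdots P(x_k),s))=1]-\Pr[A(f(x_1),\dots,f(x_k),s,U_m)=1]\ge 1/p(n)=:\alpha$ for infinitely many $n$, and from $A$ I will build a predictor violating~(\ref{eq:5}). As in the informal discussion, cast $A$ in the format of Theorem~\ref{'theorem many pred'}: let $C^{(k)}(x_1,b_1,\dots,x_k,b_k,r)$ read $r$ as a seed $s$, compute each $f(x_i)$, and output $A(f(x_1),\dots,f(x_k),s,\Ext(b_1\cdots b_k,s))$; note that $C^{(k)}$ touches $x_i$ only through $f(x_i)$. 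Run $\Gen$ from Theorem~\ref{'theorem many pred'} on $C^{(k)}$ with a parameter $\epsilon=\epsilon(n)$ that is still $1/\poly(n)$ but small relative to $\alpha$, e.g.\ $\epsilon=\alpha^2/10^4$ (so $1/\epsilon$ is polynomial, as the theorem requires); this yields $\delta$ and $Q$ and determines the set generator $\GenS$ used below. Here $k$ should be a sufficiently large fixed polynomial; $k=n^3$ works.

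First dispose of the easy case $\delta<\delta'$. Then $\Pr[Q(x)=P(x)]\ge 1-\delta/2>1-\delta'/2$. Since $Q$ treats its input $x$ obliviously and $C^{(k)}$ only uses the corresponding slot through $f$, the output $Q(x)$ depends on $x$ solely through $f(x)$; hence $Q$ is (the non-uniform form of) a machine computing $P(x)$ from $f(x)$, and such a $Q$ contradicts~(\ref{eq:5}). (When $f$ and $P$ are polynomial-time computable one can let $Q$ fill the remaining slots by sampling $x_j$ and computing $(f(x_j),P(x_j))$ itself, making this machine genuinely polynomial-time, as in property~2.) So from now on assume $\delta\ge\delta'$.

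The main work is to show that the probability that Experiment~2 (run with $C^{(k)}$ and the above $\GenS$) outputs~$1$ is within $o(\alpha)$ of $\Pr[A(f(x_1),\dots,f(x_k),s,U_m)=1]$; since Experiment~1 outputs~$1$ with probability equal to the first term of~(\ref{eq:15}), combining this with the indistinguishability bound~(\ref{eq:6}) gives $\alpha\le \epsilon+o(\alpha)<\alpha$, a contradiction. Fix the $x_i$'s for a moment. In Experiment~2 the residual randomness in $b=(b_1,\dots,b_k)$ consists of the fresh fair bits drawn at the positions $i$ with $x_i\in S_i^*$, and $(f(x_1),\dots,f(x_k))$ is now a constant. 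Writing $R$ for the (random) set of positions where a coin was flipped, a leaf $b$ of this process has probability exactly $2^{-|R(b)|}$; so if for a given $x$-tuple $\Pr_b[\,|R(b)|\ge (\delta'-\tfrac1{2n})k\,]\ge 1-\eta$, then conditioning $b$ on that event gives a distribution $\eta$-close to the true conditional law of $b$ and, provided $\eta$ is bounded away from $1$, of min-entropy at least $(\delta'-\tfrac1{2n})k-1$ (the spare bit is absorbed into the extractor's slack, or by aiming for one more flipped position). Because $\Ext$ is a \emph{strong} extractor this makes $(s,\Ext(b,s))$ within $\eta+2^{-n}$ of $(s,U_m)$ in statistical distance, hence — adjoining the constant $(f(x_i))_i$ — changes $A$'s acceptance probability on that $x$-tuple by at most $\eta+2^{-n}$ when $\Ext(b,s)$ is replaced by $U_m$. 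It therefore remains to bound $\eta$: I claim $\Pr_{x,b}[\,|R|<(\delta'-\tfrac1{2n})k\,]\le \epsilon+\negl$ over Experiment~2, after which Markov's inequality shows all but a $\sqrt{\epsilon+\negl}$ fraction of $x$-tuples have $\eta\le\sqrt{\epsilon+\negl}$, and averaging the previous bound over $x$-tuples gives that the Experiment~2 output is within $2\sqrt{\epsilon+\negl}+2^{-n}=o(\alpha)$ of the desired expression. For the claim, set $Y_i=\mathbf{1}[x_i\in S_i^*]$ and consider $M_i=\sum_{j\le i}(Y_j-\mu(S_j^*))$, a martingale with increments in $[-1,1]$, so $\Pr[M_k\le -k/(2n)]\le 2^{-\Omega(k/n^2)}$, which is negligible for $k=n^3$. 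By the ``large sets'' guarantee of Theorem~\ref{'theorem many pred'}, with probability $\ge 1-\epsilon$ every $S_i^*$ arising in the run has $\mu(S_i^*)\ge\delta\ge\delta'$, whence $\sum_j\mu(S_j^*)\ge\delta'k$, so $M_k\ge -k/(2n)$ forces $|R|=\sum_jY_j\ge(\delta'-\tfrac1{2n})k$; a union bound over the two complementary bad events gives the claimed $\epsilon+\negl$.

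Finally the error bookkeeping closes the argument: with $\epsilon=\alpha^2/10^4$ one has $\epsilon+2\sqrt{\epsilon+\negl}+2^{-n}<\alpha/2$ for large $n$, contradicting that $A$'s advantage is $\alpha$ for those $n$. I expect the genuinely delicate point to be precisely the min-entropy bookkeeping in the previous paragraph: the conditional law of $b$ is not flat, and the hypothesis needed to lower-bound $|R|$ (namely $\mu(S_i^*)\ge\delta'$) is guaranteed only off an event of probability $\epsilon$ — which is why the martingale is set up on the whole probability space and the density guarantee is folded in afterwards by a union bound rather than by conditioning, and why one passes through a nearby high-min-entropy distribution instead of arguing about the source directly.
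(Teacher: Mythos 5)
Your proposal is correct and follows essentially the same route as the paper's own proof: build $C^{(k)}$ from $A$ by routing each $x_i$ through $f$ and applying $\Ext$ to the $b_i$'s, invoke Theorem~\ref{'theorem many pred'}, contradict~(\ref{eq:5}) via the oblivious predictor $Q$ when $\delta$ is small, and otherwise argue that in Experiment~2 enough positions are randomized for the strong extractor to flatten the output, then close with indistinguishability. Your treatment of the concentration step is in fact more careful than the paper's (which loosely cites a Chernoff bound): since the sets $S_i^*$ depend on the history, the indicators are not independent, and your Azuma-style martingale bound together with the explicit leaf-probability/min-entropy bookkeeping is the right way to make that step rigorous.
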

\begin{proof}
  Assume otherwise, and let $\epsilon(n)$ be inverse polynomial and
  infinitely often smaller than the distinguishing advantage of $A$. We
  consider the circuit $C^{(k)}(x_1,b_1,\ldots,x_k,b_k)$ which first
  applies~$f$ on every~$x_i$, then pick $s$ at random, computes $z :=
  \Ext(b_1,\ldots,b_k,s)$, and executes $A(f(x_1),\ldots,f(x_k),s,z)$.
  We apply Theorem~\ref{'theorem many pred'} on $C$ using
  parameter~$\frac{\epsilon}{2}$, which produces, among other things,
  a parameter~$\delta$.  Consider first the case $\delta < \delta' -
  \frac{1}{4n}$.  Then, there is a circuit~$Q$ which predicts $P(x)$
  from~$x$ and uses~$x$ obliviously in~$C$.  This implies that the
  resulting circuit evaluates $f(x)$ for any input~$x$ and ignores the
  input otherwise; we can therefore strip off this evaluation, and get
  a circuit which contradicts~(\ref{eq:5}).  In case $\delta \geq
  \delta' - \frac{1}{4n}$, we run Experiment 2.  If all sets which
  occur in the experiment are of size at least $\delta$ (and this
  happens with probability at least $1-\epsilon/2$), then we can use a
  Chernoff-Bound to see that with probability $1 - 2^{-\Omega(n)}$, at
  least $(\delta - \frac{1}{4n})k \geq (\delta' - \frac{1}{2n})k$ of
  the $x_i$ land in their respective set $S_t^*$.  Thus, in this case
  the extractor will produces a~$z$ which is $2^{-\Omega(n)}$-close to
  uniform and the indistinguishability property of
  Theorem~\ref{'theorem many pred'} implies that~(\ref{eq:15}) is
  negligible.
\end{proof}

\fi
\subsection{Cryptographic Protocols which output single bits}
\label{sec:interactionPredicate}
Again we start with an example: consider a slightly weak bit
commitment protocol, where the receiver can guess the bit the
sender committed to with probability $1-\frac{\delta}{2}$.  In such a
case, we might want to strengthen the scheme.  For example, in order
to commit to a single bit~$b$, we could ask the sender to first commit
to two random bits $r_1$ and $r_2$, and then send $b \oplus r_1 \oplus
r_2$ to the receiver.  The hope is that the receiver has to guess both
$r_1$ and $r_2$ correctly in order to find $b$, and so the protocol
should be more secure.

In the case where the protocol has some defect that sometimes allows a
sender to cheat, we might also want to consider the protocol where
the sender commits twice to $b$, or, alternatively, that he commits to
$r_1$, then to $r_2$, and sends both $b \oplus r_1$ and $b \oplus r_2$
to the receiver.  In this case, one can  hope that a cheating
receiver still needs to break the protocol at least once, and that the
security should not degrade too much.

Just how will the security change?  We want to consider a scenario in
which the security is information theoretic.  We can do this by
assuming that instead of the weak protocol, a trusted party
distributes a bit~$X$ to the sender and some side information~$Z$ to
the receiver.  The guarantee is that for any $f$, $\Pr[f(Z) = X] \leq 1 -
\frac{\delta}{2}$.  In such a case, one can easily obtain bounds on the
security of the above protocols, and the hope is that the same bounds
hold in the computational case.  The theorem below states that this is
indeed true (for protocols where the security consists of hiding
single bits).

We remark that while the two aforementioned examples of protocol composition are already handled
in \cite{HalRab08} (their result applies to any direct product and any XOR as
above), Theorem~\ref{'theorem bit protocols'} handles any
information theoretic amplification protocol as long as it can be
implemented efficiently.

\begin{definition}
  A pair $(X,Z)$ of random variables over $\{0,1\} \times \cZ$, where
  $\cZ$ is any finite set, is $\delta$-hiding if
  \begin{align}
    \max_{f: \cZ \to \{0,1\}}\Pr[f(Z) = X] \leq
    1-\frac{\delta}{2}.
  \end{align}
\end{definition}

\begin{theorem}\label{'theorem bit protocols'}
  Let a cryptographic protocol (which we think of as ``weak'') $W =
  (A_W, B_W)$ be given in which $A_W$ has as input a single bit $c$.
  Assume that there is a function $\delta$ such that for any
  polynomial time adversary $B_W^*$ there is a negligible function
  $\nu$ such that
  \begin{align}\label{eq:16}
    \Pr_{x \leftarrow \{0,1\}}[\<A_W(x),B_W^*\>_{B} = x] \leq 1 -
    \frac{\delta}{2} + \nu(n),
  \end{align}
  where the probability is also over the coins of $A_W$ and $B_W^*$
  (if any).

  Let further an information theoretic protocol $I = (A_I, B_I)$ be
  given.  In $I$, $A_I$ takes $k$ input bits $(X_1,\ldots,X_k)$ and
  has a single output bit.  Furthermore, assume that $I$ is
  hiding in the sense that for $k$ independent $\delta$-hiding random
  variables $(X_i, Z_i)$, any (information theoretic) adversary
  $B_I^*$, and for some function~$\eta(k)$:
  \begin{align}\label{eq:17}
    \Pr[\<A_I(X_1,\ldots,X_k), B_I^*(Z_1,\ldots,Z_k)\>_A =
    \<A_I(X_1,\ldots,X_k), B_I^*(Z_1,\ldots,Z_k)\>_{B}]< \frac{1}{2} + \eta(k).
  \end{align}

  Let $S = (A_S,B_S)$ be the protocol where $A$ and $B$ first execute
  $k(n)$ copies of $W$ sequentially, where $A$ uses uniform random
  bits as input.  Then, they run a single execution of protocol $I$.
  In the execution to $I$, $A$ uses his $k$ input bits to the weak
  protocols as input.  The output of $A$ in $S$ is the output of $A$
  in the execution of $I$.  We also need that $(A_I,B_I)$ and $k(n)$
  are such that $I$ can be run in time $\poly(n)$ for $k = k(n)$.

  Then, for any polynomial time $B_S^*$ there is a negligible function
  $\nu'$ such that
  \begin{align}\label{eq:7}
    \Pr[\<A_S, B_S^*\>_{A} = \<A_S, B_S^*\>_{B}] \leq \frac{1}{2} +
    \eta(k) + \nu'(n)\;.
  \end{align}
\end{theorem}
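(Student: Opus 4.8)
The plan is a reduction: given a polynomial-time adversary $B_S^*$ that breaks $S$ with advantage noticeably larger than $\eta(k)$, I will build a polynomial-time adversary against one of the two components—either the weak protocol $W$ (contradicting~(\ref{eq:16})) or the information-theoretic protocol $I$ (contradicting~(\ref{eq:17}))—and conclude by contradiction. The bridge between the computational and information-theoretic worlds is Theorem~\ref{'theorem bit protocols'}'s predecessor, Theorem~\ref{'theorem many pred'}, which is designed exactly to convert ``no circuit predicts a single bit from side information'' into ``a sequence of bits looks jointly like it was partially replaced by random bits.'' The non-rewinding and oblivious-treatment properties of $\GenS$ and $Q$ from that theorem are what make it usable inside an interactive protocol.

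First I would set up the right predicate. Fix the randomness and messages of the $k$ sequential executions of $W$ inside $S$; the transcript of the $i$-th copy of $W$, together with the state of $B_S^*$, plays the role of the side information $Z_i$ about the random input bit $X_i = c_i$ that $A_S$ feeds into that copy. Assumption~(\ref{eq:16}) says precisely that for each copy, no efficient strategy guesses $c_i$ from its view with probability more than $1-\delta/2+\negl$. Now consider the circuit $C^{(k)}(x_1,b_1,\ldots,x_k,b_k)$ which simulates the whole interaction of $B_S^*$ with $S$: it runs the $k$ copies of $W$ using the $x_i$ as $A$'s inputs and the $b_i$ as the bits ``claimed'' to have been committed, then runs $I$ with $B_S^*$ and outputs $1$ iff $B_S^*$'s guess of $A$'s output in $I$ is correct. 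Running $C^{(k)}$ in Experiment~1 is exactly playing protocol $S$ honestly against $B_S^*$, so $\Pr[\text{Exp.\ 1 outputs }1] = \Pr[\<A_S,B_S^*\>_A = \<A_S,B_S^*\>_B]$. Here I must be careful that $W$ is interactive: the ``side information'' $Z_i$ for copy $i$ is not generated until copies $1,\ldots,i-1$ and the inputs $b_1,\ldots,b_{i-1}$ are fixed, which is why Theorem~\ref{'theorem many pred'} allows the set $S^*_i$ to depend on the tuple $t_i=(x_1,b_1,\ldots,x_{i-1},b_{i-1})$, and why we need $\GenS$ to be non-rewinding so that the reduction can be realized by an honest-looking interaction rather than by rewinding $B_S^*$.

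Next, apply Theorem~\ref{'theorem many pred'} to $C^{(k)}$ with parameter $\epsilon$ set to (say) a third of $B_S^*$'s excess advantage, yielding $\delta$, $Q$, and $\GenS$. Case 1: $\delta < \delta'$ where $\delta'$ is the hiding parameter guaranteed by~(\ref{eq:16})—more precisely $\delta$ is noticeably below $\delta$ of the hypothesis. Then $Q$ predicts $P$ (the $i$-th committed bit from its view) with probability $\geq 1-\delta/2 + \epsilon/(16k)$, and since $Q$ treats $x$ obliviously and is non-rewinding, it can be run inside an interaction with $B_W^*$ on one copy of $W$ (embedding the real challenge copy in a fixed position $i$ with the earlier copies' inputs hardwired), contradicting~(\ref{eq:16}). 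Case 2: $\delta$ is at least roughly the hiding parameter. By the Large Sets and Indistinguishability properties, $\Pr[\text{Exp.\ 2 outputs }1]$ is within $O(\epsilon)$ of $\Pr[\text{Exp.\ 1 outputs }1]$, and in Experiment~2 every $x_i$ landing in $S^*_i$ has $b_i$ replaced by a uniform random bit, with each $\mu(S^*_i)\geq\delta$ except with probability $\epsilon$. But Experiment~2, restricted to the event that all sets are large, is an information-theoretic attack on $I$: the pairs $(X_i, Z_i)$ are $\delta$-hiding (the replacement-by-random is exactly the information-theoretic idealization), so~(\ref{eq:17}) bounds $\Pr[\text{Exp.\ 2 outputs }1]$ by $\tfrac12 + \eta(k)$. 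Chaining the two estimates gives~(\ref{eq:7}) with $\nu'$ absorbing $O(\epsilon)$ plus the negligible slacks. The main obstacle I expect is the bookkeeping in Case 2: one must verify that conditioning on ``all $S^*_i$ are large'' does not distort the distribution of the $(X_i,Z_i)$ away from being a product of $\delta$-hiding pairs in a way that breaks the applicability of~(\ref{eq:17}), and that the inefficiency of the information-theoretic $B_I^*$ extracted from $B_S^*$ is irrelevant because~(\ref{eq:17}) quantifies over \emph{all} adversaries; the non-rewinding property is what guarantees this extracted $B_I^*$ is a legitimate (if computationally unbounded) strategy in the stand-alone execution of $I$.
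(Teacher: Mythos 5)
Your proposal is correct and follows essentially the same route as the paper: the same predicate $P(x)=c$ with $x$ the randomness-plus-input of $A_W$, the same circuit $C^{(k)}$ simulating $B_S^*$ with the $b_i$ substituted only in the $I$-phase, the same application of Theorem~\ref{'theorem many pred'}, and the same dichotomy (the paper phrases your Case~2 as the contradiction establishing $\delta_{T\ref{'theorem many pred'}}\leq\delta$, then always lands in your Case~1 and uses the non-rewinding, oblivious $Q$ to break~(\ref{eq:16})). The concerns you flag at the end are handled in the paper exactly as you anticipate: the small-set event is absorbed additively rather than by delicate conditioning, and~(\ref{eq:17}) quantifying over unbounded adversaries covers the extracted $B_I^*$.
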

\begin{proof}
  Let $x \in \{0,1\}^n$ be the concatenation of the randomness which
  $A$ uses in an execution of the protocol $W$ and his input bit $c$.
  We let $P: \{0,1\}^n \to \{0,1\}$ be the predicate which outputs $c
  = P(x)$.

  In order to obtain a contradiction, we fix an adversary $B_S^*$ for
  the protocol~$S$ which violates~(\ref{eq:7}).  We would like to
  apply Theorem~\ref{'theorem many pred'}.  For this, we define
  $C^{(k)}(x_1,b_1,\ldots,x_k,b_k)$ as follows: $C^{(k)}$ first
  simulates an interaction of $B_S^*$ with $A_S$, where $A_S$ uses
  randomness $x_i$ in the $i$th invocation of the weak protocol~$W$.
  After this, $B_S^*$ is in some state in which it expects an
  invocation of the information theoretic protocol.  $C^{(k)}$
  simulates this information theoretic protocol, but it runs $A_I$
  with inputs $b_1,\ldots,b_k$ instead of the actual inputs to the
  weak protocols.  In the end, $B_S^*$ produces a guess for the output
  bit of $A_S$, and $C^{(k)}$ outputs $1$ if this guess equals the
  output of $A_I(b_1,\ldots,b_k)$ in the simulation.

  In Experiment~1 of Theorem~\ref{'theorem many pred'}, $b_i = P(x_i)$
  is used, and so $C^{(k)}$ exactly simulates an execution of the
  protocol~$S$.  Since we assume that $B_S^*$ contradicts
  (\ref{eq:7}), we see that the probability that $C^{(k)}$
  outputs~$1$ in Experiment 1 is, for infinitely many~$n$ and some
  constant~$c$ at least $\frac{1}{2} + \eta(k) + n^{-c}$.

  We now apply Theorem~\ref{'theorem many pred'} on the circuit
  $C^{(k)}$ with parameter $n^{-c}/3$.  This yields a parameter
  $\delta_{T\ref{'theorem many pred'}}$ (the subscript indicates that
  it is from Theorem~\ref{'theorem many pred'}).  We claim
  that
  \begin{align}\label{eq:18}
    \delta_{T\ref{'theorem many pred'}} \leq \delta &&\text{almost surely.}
  \end{align}

  To see this, we assume otherwise and obtain a contradiction.  In
  Experiment 2, Let $\Gamma_i$ be the communication produced by the
  weak protocol $W$ in round $i$.  Assuming all sets $S_i^*$ in the
  execution are of size at least $\delta$ (this happens with
  probability at least $1-n^{-c}/3$), the tuples $(b_i,\Gamma_i)$
  are $\delta$-hiding random variables.  Consequently, when the
  circuit $C^{(k)}$ simulates the information theoretic protocol $I$
  using bits $b_i$, it actually simulates it in an instance in which
  it was designed to be used.  Since (\ref{eq:17}) holds for an
  arbitrary adversary in this case we get that
  \begin{align}
    \Pr[\text{$C^{(k)}$ outputs $1$ in Experiment~2} | \text{No set
      $S_i^*$ was of measure less than $\delta$}] \leq \frac{1}{2} +
    \eta(k).
  \end{align}
  Therefore, the probability that $C^{(k)}$ outputs $1$ in Experiment
  2 is at most $\frac{1}{2} + \eta(k) + \frac{n^{-c}}{3}$, and using
  ``indistinguishability'' the probability that $C^{(k)}$ outputs~$1$
  in Experiment 1 is at most $\frac{1}{2} + \eta(k) +
  \frac{2n^{-c}}{3}$.  However, our assumption was that the
  probability that $C^{(k)}$ outputs~$1$ is at least $\frac{1}{2} +
  \eta(k) + n^{-c}$, and so almost surely $\Gen$ does not
  output such a big $\delta_{T\ref{'theorem many pred'}}$, establishing
  (\ref{eq:18}).

  Theorem~\ref{'theorem many pred'} also provides us with a
  non-rewinding circuit~$Q$ which treats $x$ obliviously and which
  satisfies ``predictability''.  We explain how to use $Q$ to
  break~(\ref{eq:16}), the security property of the weak protocol $W$.

  Since $Q(x)$ is non-rewinding, it uses the input $x$ exclusively in
  a fixed position~$i$, together with a fixed prefix
  $(x_1,\ldots,x_{i-1})$, in all calls to $C^{(k)}$.  We first extract
  $i$ and the prefix.

  We now explain a crucial point: how to interact with $A_W$ in order
  to cheat.  We simulate the $i-1$ interactions of $A_W$ with $B_S^*$
  up to and including round $i-1$ using $(x_1, \ldots, x_{i-1})$ as
  the input bit and randomness of $A$.  In round~$i$, we continue with
  the \emph{actual} interaction with $A_W$.  Here, $A_W$ uses
  randomness $x$ (on which we, however, do not have access).

  After this interaction, we need to be able to extract the bit $c$ of
  $A_W$.  For this, we evaluate $Q(x)$, which we claim is possible.
  Since $Q$ is oblivious and deterministic, the only difficulty is in
  evaluating the calls to $C^{(k)}(x_1,b_1,\ldots,x_{k},b_k,r)$.  All
  calls use the same values for $x_1,\ldots,x_i$.  Recalling how
  $C^{(k)}$ is defined, we see that we can continue from the state we
  had after the interaction with $A_W$ in order to evaluate $C^{(k)}$
  completely (note that all the $b_i$ are given, so the we can also
  evaluate the information theoretic protocol $I$).

  We get from Theorem~\ref{'theorem many pred'} that $Q$ satisfies,
  almost surely, infinitely often, using (\ref{eq:18})
  \begin{align}
    \Pr_{x\leftarrow\{0,1\}^n}[Q(x) = P(x)] \geq 1 -
    \frac{\delta}{2} + \frac{1}{48 kn^c}\;.
  \end{align}
  This therefore gives a contradiction to (\ref{eq:16}): in order to
  get rid of the ``almost surely'', we just consider the algorithm
  which first runs $\Gen$ and then applies the above protocol -- this
  only loses a negligible additive term in the probability.
\end{proof}

\section{Weakly Verifiable Puzzles}

\subsection{Interactive Weakly Verifiable Puzzles}
Consider a bit commitment protocol, in which a sender commits to a
single bit $b$.  In a first phase the sender and the receiver enact in an
interactive protocol, after which the sender holds some opening
information $y$, and the receiver has some way of checking whether
$(y,b)$ is a valid decommitment.  If the protocol is secure, then it
is a computationally hard problem for the sender to come up with two
strings $y_0$ and $y_1$ such that both $(y_0,0)$ and $(y_1,1)$ are
valid decommitments, in addition, he may not even know the function the receiver will use
to validate a decommitment pair,\footnote{One might want to generalize this by
  saying that in order to open the commitment, sender and receiver
  enter yet another interactive protocol.  However, our presentation
  is without loss of generality: the sender can send the randomness he
  used in the first protocol instead.  The receiver then checks, if
  this randomness together with $b$ indeed produces the communication
  in the first round, and whether in a simulation of the second
  protocol he accepts.}  and thus in general there is no
way for the sender to recognize a valid pair $(y_0,y_1)$.  We abstract
this situation in the following definition; in it we can say that the
solver produces no output because in the security property all efficient
algorithms are considered anyhow.
\begin{definition}
  An \emph{interactive weakly verifiable puzzle} consists of a
  protocol $(P,S)$ and is given by two interactive algorithms $P$ and
  $S$, in which $P$ (the problem poser) produces as output a
  circuit~$\Gamma$, and $S$ (the solver) produces no output.

  The \emph{success probability} of an interactive algorithm~$S^*$ in solving a
  weakly verifiable puzzle $(P,S)$ is:
  \begin{align}
    \Pr[y = \<P, S^*\>_{S^*}; \Gamma(y) = 1]
  \end{align}
  The puzzle is \emph{non-interactive} if the protocol consists of $P$
  sending a single message to $S$.
\end{definition}
\medskip

Our definition of a non-interactive weakly verifiable
puzzle coincides with the usual one \cite{CaHaSt05}.  The security
property of an interactive weakly verifiable puzzle is that for any
algorithm (or circuit) $S^*$ of a restricted class, the success
probability of $S^*$ is bounded.

An important property is that~$S^*$ does not get access to~$\Gamma$.
Besides bit commitment above, an example of such a puzzle is a
CAPTCHA.  In both cases it is not obvious whether a given solution is
actually a correct solution.

\subsection{Strengthening interactive weakly verifiable puzzles}
Suppose that $g$ is a monotone boolean function with~$k$ bits of
input, and $(P^{(1)},S^{(1)})$ is a puzzle.  We can consider the
following new puzzle $(P^{(g)}, S^{(g)})$: the sender and the receiver
sequentially create~$k$ instances of $(P^{(1)}, S^{(1)})$, which
yields circuits $\Gamma^{(1)}, \ldots, \Gamma^{(k)}$ for $P$.  Then
$P^{(g)}$ outputs the circuit $\Gamma^{(g)}$ which computes
$\Gamma^{(g)}(y_1,\ldots,y_k) =
g(\Gamma^{(1)}(y_1),\ldots,\Gamma^{(k)}(y_k))$.

Intuitively, if no algorithm solves a single puzzle~$(P^{(1)},
S^{(1)})$ with higher probability than~$\delta$, the probability that
an algorithm solves~$(P^{(g)}, S^{(g)})$ should not be more than
approximately~$\Pr_{u\leftarrow \mu_\delta^k}[{g(u)=1}]$. (Recall that
$\mu_\delta^k$ is the distribution on $k$-bits, where each bit is
independent and 1 with probability $\delta$.)  The following theorem
states exactly this.

\begin{theorem}\label{'theorem wvp'}
  There exists an algorithm $\Gen(C,g,\epsilon, \delta,n)$ which takes
  as input a circuit $C$, a monotone function $g$, and parameters $\epsilon,
  \delta, n$, and produces a circuit~$D$ such that the following
  holds.  If~$C$ is such that
  \begin{align}\label{'eq-puzzle-theorem-A'}
    \Pr[\Gamma^{(g)}(\<P^{(g)}, C\>_C) = 1] \geq \Pr_{u \leftarrow
      \mu_{\delta}^{k}}[g(u) = 1] + \epsilon,
  \end{align}
  then, $D$ satisfies almost surely,
  \begin{align}\Pr[\Gamma^{(1)}(\<P^{(1)}, D\>_D) = 1] \geq \delta +
    \frac{\epsilon}{6k}.
    \label{'eq-puzzle-theorem-B'}
  \end{align}
  Additionally, $\Gen$ and~$D$ only require oracle access to both~$g$
  and~$C$, and $D$ is non-rewinding.

  Furthermore, $\size(D) \leq \size(C) \cdot
  \frac{6k}{\epsilon}\log(\frac{6k}{\epsilon})$ and $\Gen$ runs in
  time $\poly(k,\frac{1}{\epsilon},n)$ with oracle calls to $C$.
\end{theorem}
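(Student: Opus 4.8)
The plan is to mimic the reduction of Canetti, Halevi and Steinberger~\cite{CaHaSt05}, extended from the AND-case to an arbitrary monotone~$g$ by a potential-function hybrid argument. For a prefix of bits $c_1,\ldots,c_{i-1}$ define $h_i(c_1,\ldots,c_{i-1}):=\Pr_{u\leftarrow\mu_\delta^{k-i+1}}[g(c_1,\ldots,c_{i-1},u)=1]$; monotonicity gives $h_{i+1}(c_1,\ldots,c_{i-1},1)\ge h_{i+1}(c_1,\ldots,c_{i-1},0)$, and by construction $h_i(c_1,\ldots,c_{i-1})=\delta\,h_{i+1}(c_1,\ldots,c_{i-1},1)+(1-\delta)\,h_{i+1}(c_1,\ldots,c_{i-1},0)$. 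Run $C$ against the honest composed poser~$P^{(g)}$, write $c_j:=\Gamma^{(j)}(y_j)$ for the indicator that $C$'s $j$-th answer is correct, and consider the hybrid experiments $H_0,\ldots,H_k$, where in $H_i$ the bits $c_1,\ldots,c_i$ are the ones really produced by~$C$ while $c_{i+1},\ldots,c_k$ are replaced by fresh independent $\mu_\delta$-bits $u_{i+1},\ldots,u_k$. Then $\Pr[H_0\text{ makes }g=1]=\Pr_{u\leftarrow\mu_\delta^k}[g(u)=1]$, whereas by hypothesis~(\ref{'eq-puzzle-theorem-A'}) $\Pr[H_k\text{ makes }g=1]\ge\Pr_{u\leftarrow\mu_\delta^k}[g(u)=1]+\epsilon$.

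The main computation is that, using the displayed identity, the change from $H_{i-1}$ to $H_i$ equals $\Pr[\mathrm{Piv}_i\wedge c_i=1]-\delta\,\Pr[\mathrm{Piv}_i]$, where $\mathrm{Piv}_i$ is the event (in the mixed distribution of $H_{i-1}$, i.e.\ real prefix $c_1,\ldots,c_{i-1}$ and fresh suffix $u_{i+1},\ldots,u_k$) that coordinate~$i$ is pivotal for~$g$, that is $g(c_1,\ldots,c_{i-1},1,u_{i+1},\ldots,u_k)\ne g(c_1,\ldots,c_{i-1},0,u_{i+1},\ldots,u_k)$; monotonicity of~$g$ is used here to guarantee that when coordinate~$i$ is pivotal the former value is~$1$ and the latter is~$0$, and the non-pivotal case contributes equally to the two hybrids. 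Telescoping over $i=1,\ldots,k$ gives $\sum_{i=1}^k\bigl(\Pr[\mathrm{Piv}_i\wedge c_i=1]-\delta\,\Pr[\mathrm{Piv}_i]\bigr)\ge\epsilon$, so some coordinate~$i^*$ satisfies $\Pr[\mathrm{Piv}_{i^*}\wedge c_{i^*}=1]\ge\delta\,\Pr[\mathrm{Piv}_{i^*}]+\epsilon/k$; in particular $\Pr[\mathrm{Piv}_{i^*}]\ge\epsilon/k$ and $\Pr[c_{i^*}=1\mid\mathrm{Piv}_{i^*}]\ge\delta+\epsilon/k$.

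The circuit~$D$ is the natural embedding reduction made non-rewinding. It embeds the external instance of $(P^{(1)},S^{(1)})$ at coordinate~$i^*$ and plays the honest subpuzzle-poser for the other $k-1$ coordinates, so that it holds $\Gamma^{(j)}$ for all $j\ne i^*$. To be non-rewinding, $D$ first feeds~$C$ a \emph{fixed} transcript for subpuzzles $1,\ldots,i^*-1$ (and fixes $C$'s coins up to that point); $\Gen$ chooses this prefix, together with~$i^*$, by sampling many candidates and keeping one whose empirical estimate of $\Pr[\mathrm{Piv}_i\wedge c_i=1]-\delta\,\Pr[\mathrm{Piv}_i]$ is largest, which $\Gen$ can compute because it may itself simulate a fresh honest subpuzzle in coordinate~$i$. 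Then $D$ runs the single real interaction with~$P^{(1)}$ for subpuzzle~$i^*$, records the resulting state of~$C$, and from that state repeatedly draws fresh subpuzzles $i^*+1,\ldots,k$ and fresh $\mu_\delta$-bits $u_{i^*+1},\ldots,u_k$, runs~$C$ to the end, recomputes the $c_j$ for $j<i^*$, and tests whether coordinate~$i^*$ is pivotal; it repeats this for up to $O\!\bigl(\tfrac{k}{\epsilon}\log\tfrac{k}{\epsilon}\bigr)$ attempts, which yields the claimed size bound. Since $\Pr[\mathrm{Piv}_{i^*}]\ge\epsilon/k$, with all but small probability some attempt is pivotal; conditioning on the first such attempt, $D$ outputs $C$'s $i^*$-th answer, which solves the external puzzle with probability $\Pr[c_{i^*}=1\mid\mathrm{Piv}_{i^*}]\ge\delta+\epsilon/k$. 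Choosing the attempt budget as above makes the probability of exhausting it at most about $\epsilon/(6k)$, and this together with $\Gen$'s estimation error (each estimate correct almost surely, by Chernoff bounds) is exactly the slack giving~(\ref{'eq-puzzle-theorem-B'}).

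I expect the main obstacle to be the bookkeeping of what $D$ can observe. It never learns the checking circuit $\Gamma^{(i^*)}$ of the external puzzle, so $\mathrm{Piv}_{i^*}$ must be phrased purely in terms of objects $D$ controls --- the $c_j$ for $j<i^*$ coming from its own $\Gamma^{(j)}$, and the self-chosen bits $u_{i^*+1},\ldots,u_k$ --- and the restarts must re-use the one real subpuzzle-$i^*$ transcript while re-randomizing everything occurring after it; this is precisely what non-rewinding provides. A secondary point is checking that the hybrid identity is unaffected by the fact that in~$D$ the coordinates after $i^*$ correspond to honestly generated subpuzzles rather than to abstract bits: the bits actually used in the pivotality test are the fresh $u_{i^*+1},\ldots,u_k$ that $D$ injects, independent of those transcripts, so the distribution seen by the test matches that of $H_{i^*-1}$. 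Finally one verifies that all of $\Gen$'s sampling runs in $\poly(k,1/\epsilon,n)$ time with oracle calls to $C$ and~$g$ and errs only with negligible probability, again by Chernoff bounds and a union bound.
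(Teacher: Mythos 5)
Your telescoping identity is correct: with the coupling you describe, the difference between consecutive hybrids is exactly $\Pr[\mathrm{Piv}_i\wedge c_i=1]-\delta\Pr[\mathrm{Piv}_i]$, so some coordinate $i^*$ satisfies $\Pr[\mathrm{Piv}_{i^*}\wedge c_{i^*}=1]-\delta\Pr[\mathrm{Piv}_{i^*}]\ge\epsilon/k$, and a Markov argument lets you fix a prefix. This is already a genuinely different decomposition from the paper's, which never leaves coordinate $1$: it either finds a fixed first-instance randomness $\pi^*$ and bit $b$ with large surplus $S_{\pi^*,b}$ and recurses on the residual function $g(b,\cdot)$, or argues that coordinate $1$ itself is solvable. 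The gap in your proposal is the last step, where you pass from the on-average guarantee at $i^*$ to the success probability of the rejection-sampling circuit $D$. Write $p_\pi=\Pr[\mathrm{Piv}_{i^*}\mid\pi]$ and $q_\pi=\Pr[c_{i^*}=1\mid\mathrm{Piv}_{i^*},\pi]$ for a fixed external instance $\pi$ of $(P^{(1)},S^{(1)})$ (prefix fixed). Your guarantee is $\E_\pi[p_\pi q_\pi]-\delta\,\E_\pi[p_\pi]\ge\epsilon/k$, i.e.\ a statement about the ratio of averages $\E_\pi[p_\pi q_\pi]/\E_\pi[p_\pi]$. But $D$ cannot rewind the external poser, so all of its attempts share the same $\pi$ and it conditions on pivotality \emph{separately for each instance}; what it achieves is $\E_\pi\bigl[(1-(1-p_\pi)^T)\,q_\pi\bigr]$, an average of per-instance quantities. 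These can differ badly: if a fraction $\rho$ of instances have $p_\pi=1$, $q_\pi=1$ while the rest have $p_\pi\approx 0$, then $\E[pq]-\delta\E[p]\approx\rho(1-\delta)$ can exceed $\epsilon/k$ while $D$ succeeds only with probability about $\rho$, far below $\delta+\frac{\epsilon}{6k}$ unless $\delta$ is tiny. Both of your slack arguments break here: ``$\Pr[\mathrm{Piv}_{i^*}]\ge\epsilon/k$, so some attempt is pivotal w.h.p.'' holds only on average over $\pi$ (for most individual $\pi$ the budget may be exhausted), and ``conditioned on a pivotal attempt the answer is correct with probability $\Pr[c_{i^*}=1\mid\mathrm{Piv}_{i^*}]$'' silently replaces $\E_\pi[q_\pi]$ by $\E_\pi[p_\pi q_\pi]/\E_\pi[p_\pi]$.

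The paper's proof is built precisely to defuse this. First, in the non-recursive branch the pivotality probability is normalized by the \emph{fixed} quantity $\Pr_{u\leftarrow\mu_\delta^k}[u\in\cG_1\setminus\cG_0]$ rather than by the instance-dependent $p_\pi$, the discrepancy being exactly $S_{\pi,1}-S_{\pi,0}$; instances with $p_\pi\le\frac{\epsilon}{6k}$ are shown to contribute at most $\frac{\epsilon}{6k}$ to the relevant numerator. Second, and crucially, $\Gen$ \emph{tests} many candidate $\pi^*$ and recurses whenever some surplus is at least $(1-\frac{3}{4k})\epsilon$; the concentrated scenario above is exactly the one in which this test fires, so in the remaining case one may assume almost surely that all but an $\frac{\epsilon}{6k}$-fraction of instances have small surplus, which is what tames the error terms in (\ref{eq:29})--(\ref{eq:33}). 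Your proposal has no analogue of this case analysis, and without it (or some other mechanism, e.g.\ a soft, surplus-weighted output rule) the bound (\ref{'eq-puzzle-theorem-B'}) does not follow from the properties you established for $i^*$. A secondary, more cosmetic difference is that the paper's pivotality test uses the real outcomes $c_2,\ldots,c_k$ of the simulated sub-puzzles rather than fresh $\mu_\delta$-bits.
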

The monotone restriction on~$g$ in the previous theorem is necessary.
For example, consider $g(b) = 1-b$.  It is possible to satisfy $g$
with probability 1 by producing an incorrect answer, but $\Pr_{u
  \leftarrow \mu_\delta}[g(u) = 1] = 1 - \delta$.

\subsection{Proof of Theorem~\ref{'theorem wvp'}}
\paragraph{Algorithm Description}
If $k=1$, $\Gen$ creates the circuit $D$ which runs $C$ and outputs its
answer.  Then either $g$ is the identity or a constant function. If
$g$ is the identity, the statement is trivial.  If $g$ is a constant
function, the statement is vacuously true. $D$ is non-rewinding.

In the general case, we need some notation.  For $b \in \{0, 1\}$, let
$\cG_b$ denote the set of inputs~$\cG_b := \{b_1,\ldots,b_k |
g(b,b_2,\ldots,b_k)=1\}$ (i.e., the first input bit is disregarded and
replaced by $b$).  We remark that $\cG_0 \subseteq \cG_1$ due to monotonicity of $g$.  We will
commonly denote by $u = u_1 u_2 \cdots u_k \in \{0, 1\}^k$ an element
drawn from~$\mu_\delta^k$.  After a given interaction of $C$ with $P^{(g)}$,
let $c = c_1 c_2 \cdots c_k \in \{0, 1\}^k$ denote the string where
$c_i$ is the output of $\Gamma^{(i)}$ on input $y_i$, which is the
$i$th output of $C$.  We denote the randomness used by~$P^{(g)}$ in
execution $i$ by $\pi_i$.

For $\pi^*, b \in \{0, 1\}^n \times \{0, 1\}$ we now define the
surplus $S_{\pi^*,b}$.  It denotes how much better $C$ performs than
``it should'', in the case where the randomness of~$P^{(g)}$ in the first
instance is fixed to $\pi^*$, and the output of~$\Gamma^{(1)}(y_1)$ is
ignored (i.e., we don't care whether $C$ solves the first puzzle
right), and~$b$ is used instead:
\begin{align}\label{eq:13}
  S_{\pi^*, b} := \Pr_{\pi^{(k)}}[c \in \cG_b| \pi_1 = \pi^*] -
  \Pr_{u\leftarrow\mu_{\delta}^k}[u \in \cG_b],
\end{align}
where the first probability is also over the interaction between
$P^{(g)}$ and $C$ as well as randomness $C$ uses (if any).

The algorithm then works as follows: first
pick~$\frac{6k}{\epsilon}\log(n)$ candidates $\pi^*$ for the
randomness of $P^{(g)}$ in the first position.  For each of those,
simulate the interaction $(P^{(g)}, C)$ and then get
estimates~$\Stilde_{\pi^*,0}$ and $\Stilde_{\pi^*,1}$ of~$S_{\pi^*,0}$
and $S_{\pi^*,1}$ such that~$|\Stilde_{\pi^*,b}-S_{\pi^*,b}|\leq
\frac{\epsilon}{4k}$ almost surely.

We consider two cases:
\begin{itemize}
\item One of the estimates satisfies $\Stilde_{\pi^*,b} \geq
  (1-\frac{3}{4k})\epsilon$.

  In this case, we fix~$\pi_1 := \pi^*$ and~$c_1 := b$, and
  invoke~$\Gen(C',g',(1-\frac{1}{k})\epsilon,\delta,n)$, using the
  function $g'(b_2,\dots,b_k) = g(c_1,b_2,\dots,b_k)$ and circuit $C'$
  which is defined as follows: $C'$ first (internally) simulates an
  interaction of $P^{(1)}$ with $C$, then follows up with an
  interaction with $P^{(g')}$.
\item For all estimates $\Stilde_{x^*,b} < (1-\frac{3}{4k})\epsilon$.

  In this case, we output the following circuit~$D^C$: in a first
  phase, use $C$ to interact with $P^{(1)}$.  In the second phase,
  simulate $k-1$ interactions with $P^{(1)}$ and obtain
  $(y_1,\ldots,y_k) = C(x,x_2,\ldots,x_k)$.  For $i = 2,\ldots,k$ set
  $c_i = \Gamma_i(y_i)$. If~$c = (0,c_2,\ldots,c_k) \in \cG_1
  \setminus \cG_0$, return~$y_1$, otherwise repeat the second phase
  $\frac{6k}{\epsilon}\log(\frac{6k}{\epsilon})$ times.  If all
  attempts fail, return the special value~$\bot$ (or an arbitrary
  answer).
\end{itemize}

\ifshort Due to space constraints, the proof of correctness of the above algorithm is omitted, but can be found in the appendix.
\fi

\iffull

\paragraph{Overview of Correctness}
The interesting case is when $\Gen$ does not recurse.  In this case we
know that $C$ has higher success probability than $\Pr_{u\leftarrow
  \mu_\delta^k}[g(u)=1]$, but for most $\pi^*$, the surpluses
$S_{\pi^*,0}$ and $S_{\pi^*,1}$ are less than
$(1-\frac{1}{k})\epsilon$.  Intuitively, then $C$ is correct on the
first coordinate unusually often when $c \in \cG_1 - \cG_0$ (as this
is the only time that being correct on the first coordinate helps).
If we could assume that 1) that the algorithm \emph{always} outputs an
answer, and 2) for \emph{every} $\pi^*$, the surpluses, $S_{\pi^*,0}$
and $S_{\pi^*,1}$ are less than $(1-\frac{1}{k})\epsilon$, then the
theorem would follow by straight-forward manipulations of probability.

Unfortunately these assumptions are not true, but the proof below
shows that because these assumptions only fail slightly, not much is
lost.  Informally, Equations~\ref{eq:22}-\ref{eq:27} show that if the
algorithm fails to output an answer it is either because
$\Pr_{\pi^{(k)}}[c \in \cG_1 - \cG_0|\pi_1 = \pi^*]$ is very small (in
which case this $\pi^*$ will not contribute much anyhow), or because
we are unlucky (which happens with very small probability).
Additionally, Equations~\ref{eq:29}-\ref{eq:33} show that because we
did not find a $\pi^*$ with large surplus, we can assume that (unless
we were very unlucky) there are few $\pi^*$ with large surpluses,
which cannot have undue influence.

\paragraph{Analysis of Correctness}

Consider first the case that we find $(x^*, b)$ for which
$\Stilde_{x^*,b} \geq (1-\frac{3}{4k})\epsilon$.  We can assume that
$S_{x^*,b} \geq (1-\frac{1}{k})\epsilon$, since the error is at
most~$\epsilon/(4k)$ almost surely.  Thus, we satisfy all the
requirements to use $\Gen$ with~$k-1$ (using $x^*$ as the first input
and $g(b, \cdot)$ as the monotone function with $k$ inputs), which
will return a non-rewinding circuit for which
$\Pr_{(x,\Gamma),r}[\Gamma(D(x,r)) = 1] \geq \delta +
(1-\frac{1}{k})\epsilon/6(k-1) = \delta + \epsilon/6k$.  The remaining
properties are easily verified.

The more interesting case is if $\Gen$ does not recurse.  First, we get,
for any puzzle~$\pi^* = (x^*,\Gamma^*)$ (simply using~(\ref{eq:13})
and~$\cG_0 \subseteq \cG_1$):
\begin{align}
  \Pr_{u \leftarrow \mu_{\delta}^k}[u\in\cG_1 - \cG_0] = \Pr_{\pi^{(k)}}[c \in \cG_1 - \cG_0|\pi_1 = \pi^*] -
  (S_{\pi^*,1} - S_{\pi^*,0})
\end{align}
and thus, still fixing~$\pi^*$ and multiplying by  $\Pr_r[\Gamma^*(D(x^*,r))=1] / \Pr_{u \leftarrow \mu_{\delta}^k}[u\in\cG_1 - \cG_0]$:
\begin{align}
  \Pr_r[\Gamma^*(D(x^*,r))=1] &=
  \frac{\Pr_r[\Gamma^*(D(x^*,r))=1] \Pr_{\pi^{(k)}}[c \in \cG_1 - \cG_0|\pi_1 = \pi^*]}
  {\Pr_{u \leftarrow \mu_{\delta}^k}[u\in\cG_1-\cG_0]}  \nonumber\\
  &\qquad\qquad-
  \frac{\Pr_r[\Gamma^*(D(x^*,r))=1] (S_{\pi^*,1}-S_{\pi^*,0})}
  {\Pr_{u \leftarrow \mu_{\delta}^k}[u\in\cG_1-\cG_0]}.\label{eq:10}
\end{align}
We bound the first summand in (\ref{eq:10}):
\begin{align}  \label{eq:22}
  \Pr_r[\Gamma^*(&D(x^*,r))=1] \Pr_{\pi^{(k)}}[c \in \cG_1 - \cG_0|\pi_1 = \pi^*]\nonumber\\
  &=
  \Pr_r[\Gamma^*(D(x^*,r))\neq \bot]
  \Pr_{\pi^{(k)}}[c_1 = 1 | c \in \cG_1 - \cG_0, \pi_1 = \pi^*]
  \Pr_{\pi^{(k)}}[c \in \cG_1 - \cG_0 | \pi_1 = \pi^*].
\end{align}
If $\Pr[c\in\cG_1 - \cG_0|\pi_1 = \pi^*] \leq \frac{\epsilon}{6k}$,
then~$0 \geq \Pr[c_1 = 1 | c \in \cG_1 - \cG_0, x_1 = \pi^*] \Pr[c \in
\cG_1 - \cG_0 | \pi_1 = \pi^*] -\frac{\epsilon}{6k}$.  If
$\Pr[c\in\cG_1 - \cG_0| \pi_1 = \pi^*] > \frac{\epsilon}{6k}$ then
$\Pr[\Gamma^*(C(x^*))\neq\bot] \geq 1-\frac{\epsilon}{6k}$ since $D$
only outputs~$\bot$ if after $\frac{6k}{\epsilon}\log(6k/\epsilon)$
none of the elements~$c$ was in~$\cG_1-\cG_0$.  In both cases:
\begin{align}
  \Pr_r[\Gamma^*(&D(x^*,r))\neq \bot]
  \Pr_{\pi^{(k)}}[c_1 = 1 | c \in \cG_1 - \cG_0, \pi_1 = \pi^*]
  \Pr_{\pi^{(k)}}[c \in \cG_1 - \cG_0 | \pi_1 = \pi^*]\\
  &\geq
  \Pr_{\pi^{(k)}}[c_1 = 1 | c \in \cG_1 - \cG_0 , \pi_1 = \pi^*]
  \Pr_{\pi^{(k)}}[c \in \cG_1 - \cG_0 | \pi_1 = \pi^*] - \frac{\epsilon}{6 k}\\
  &=
  \Pr_{\pi^{(k)}}[c_1 = 1 \land c \in \cG_1 - \cG_0| \pi_1 = \pi^*] - \frac{\epsilon}{6 k}\\
  &=
  \Pr_{\pi^{(k)}}[g(c) = 1 | \pi_1 = \pi^*] - \Pr_{\pi^{(k)}}[c \in \cG_0 | \pi_1 = \pi^*] -
  \frac{\epsilon}{6 k}\\
  &=\Pr_{\pi^{(k)}}[g(c) = 1 | \pi_1 = \pi^*] - \Pr_{u \leftarrow \mu_{\delta}^k}[u \in \cG_0] - S_{\pi^*,0}
  - \frac{\epsilon}{6 k}  \label{eq:27}
\end{align}
Inserting into (\ref{eq:10}) gives
\begin{align}
  \E_{\pi^*}\bigl[ \Pr_{r}[\Gamma^*(D(x^*,r))= 1]\bigr]
  &\geq
  \E_{\pi^*}\Bigl[
  \frac{\Pr_{\pi^{(k)}}[g(c) = 1 | \pi_1 = \pi^*] - \Pr_{u \leftarrow \mu_{\delta}^k}[u \in \cG_0] - \frac{\epsilon}{6 k}}
  {\Pr_{u \leftarrow \mu_{\delta}^k}[u\in\cG_1-\cG_0]}\Bigr] \nonumber\\ &\qquad\qquad
  - \E_{\pi^*}\Bigl[\frac{
    S_{\pi^*,0} + \Pr_r[\Gamma^*(D(x^*,r))=1]
    (S_{\pi^*,1} - S_{\pi^*,0})}{\Pr_{u \leftarrow \mu_{\delta}^k}[u\in\cG_1-\cG_0]}\Bigr]\label{eq:12}
\end{align}
We bound the second summand of~(\ref{eq:12}).  Consider the set~$\cW$
of puzzles for which both $S_{\pi^*,1}$ and~$S_{\pi^*,0}$ are
not very large.  Formally:
\begin{align}  \label{eq:29}
  \cW := \Bigl\{ \pi \Bigm| \Bigl(S_{\pi,0} \leq
  \bigl(1-\frac{1}{2k}\bigr)\epsilon\Bigr) \land \Bigl(
  S_{\pi,1} \leq \bigl(1-\frac{1}{2k}\bigr)\epsilon\Bigr) \Bigr\}.
\end{align}
Almost surely, $\mu(\cW) \geq 1-\frac{\epsilon}{6k}$: otherwise $\Gen$
would accept one of the sampled puzzles almost surely and recurse.
Thus, we get
\begin{align}
  \E_{\pi^*}[
  &S_{\pi^*,0} + \Pr_r[\Gamma^*(D(x^*,r))=1]
  (S_{\pi^*,1} - S_{\pi^*,0})]\nonumber\\
  &\leq
  \frac{\epsilon}{6k} +
  \E_{\pi^* \leftarrow \cW}[
  S_{\pi^*,0} + \Pr_r[\Gamma^*(D(x^*,r))=1]
  (S_{\pi^*,1} - S_{\pi^*,0})]\\
  &\leq
  \frac{\epsilon}{6k} +
  \E_{\pi^*\leftarrow\cW}[
  S_{\pi^*,0} + \Pr_r[\Gamma^*(D(x^*,r))=1]
  ((1-\frac{1}{2k})\epsilon - S_{\pi^*,0})]\\
  &\leq
  \frac{\epsilon}{6k} +
  \E_{\pi^*\leftarrow\cW}[
  S_{\pi^*,0} + ((1-\frac{1}{2k})\epsilon - S_{\pi^*,0})]\\
  &= \Bigl(1-\frac{1}{3k}\Bigr)\epsilon. \label{eq:33}
\end{align}
We insert into (\ref{eq:12}) (and then use $\Pr[g(u)=1]= \Pr[u \in \cG_0]
+ \delta \Pr[u\in\cG_1 - \cG_0]$) to get
\begin{align}
  \Pr_{\pi,r}[\Gamma(&D(x),r)= 1]\\
  &\geq
  \E_{\pi^*}\Bigl[
  \frac{\Pr_{\pi^{(k)}}[g(c) = 1 | \pi_1 = \pi^*] - \Pr_{u \leftarrow \mu_{\delta}^k}[u \in \cG_0] - \frac{\epsilon}{6 k}}
  {\Pr_{u \leftarrow \mu_{\delta}^k}[u\in\cG_1-\cG_0]}
  -
  \frac{(1-\frac{1}{3k})\epsilon}{\Pr_{u \leftarrow \mu_{\delta}^k}[u\in\cG_1-\cG_0]}\Bigr]\\
  &\geq
  \E_{\pi^*}\Bigl[
  \frac{\Pr_{u\leftarrow\mu_\delta^k}[g(u) = 1] + \epsilon - \Pr_{u \leftarrow \mu_{\delta}^k}[u \in \cG_0] - (1-\frac{1}{6k})\epsilon}
  {\Pr_{u \leftarrow \mu_{\delta}^k}[u\in\cG_1-\cG_0]}\Bigr] \\
  &=
  \frac{\delta \Pr_{u \leftarrow \mu_{\delta}^k}[u\in \cG_1-\cG_0] + \frac{\epsilon}{6k}}
  {\Pr_{u \leftarrow \mu_{\delta}^k}[u\in\cG_1-\cG_0]}
  \geq \delta + \frac{\epsilon}{6k}.
\end{align}
This concludes the proof of Theorem~\ref{'theorem wvp'}.\hfill$\Box$
\fi
\section{Example: Bit Commitment}

\ifshort Theorems~\ref{'theorem bit protocols'} and~\ref{'theorem
  wvp'} can be used to show how to strengthen bit commitment
protocols, which was the main open problem in \cite{HalRab08}.  We
explain this as an example here.  Assume we have given a weak bit
protocol, where a cheating receiver can guess a bit after the
commitment phase with probability $1 - \frac{\beta}{2}$, and a
cheating sender can change the bit he committed to with probability
$\alpha$.  We show that such a protocol can be strengthened if $\alpha
< \beta - 1/\poly(n)$.

We should point out that a different way to prove a similar theorem
exists: one can first show that such a weak bit-commitment protocol
implies one-way functions (using the techniques of \cite{ImpLub89}).
The long sequence of works \cite{HILL99, Naor91, Rompel90, NgOnVa06,
  HaiRei07} imply that one-way functions are sufficient to build bit
commitment protocols (the first two papers will yield statistically
binding protocols, the last three statistically hiding protocols).
However, this will be less efficient and also seems less natural than
the method we use here.

In the appendix, we first define weak bit commitment protocols.  We
then recall a theorem by Valiant \cite{Valian84}, and then show how to
use it to strengthen bit commitment.

Due to space, the details are in the appendix.  However, the result is
a direct composition of the theorems above and known results.

\fi

\iffull
Theorems~\ref{'theorem bit protocols'} and~\ref{'theorem wvp'} can be
used to show how to strengthen bit commitment protocols.  We explain
this as an example here.  Assume we have given a weak bit protocol,
where a cheating receiver can guess a bit after the commitment phase
with probability $1 - \frac{\beta}{2}$, and a cheating sender can
change the bit he committed to with probability $\alpha$.  We show that
such a protocol can be strengthened if $\alpha < \beta - 1/\poly(n)$.

We should point out that a different way to prove a similar theorem
exists: one can first show that such a weak bit-commitment protocol
implies one-way functions (using the techniques of \cite{ImpLub89}).
The long sequence of works \cite{HILL99, Naor91, Rompel90, NgOnVa06,
  HaiRei07} imply that one-way functions are sufficient to build bit
commitment protocols (the first two papers will yield statistically
binding protocols, the last three statistically hiding protocols).
However, this will be less efficient and also seems less natural than
the method we use here.

In the following, we first define weak bit commitment protocols.  We
then recall a Theorem by Valiant \cite{Valian84}, and then show how to
use it to strengthen bit commitment.

\subsection{Weak Bit Commitment Protocols}
We formalize a ``weak'' bit commitment protocol between a sender and a
receiver by considering algorithms~$S(b, r_S)$ and~$R(r_R)$, where~$b$
is the bit which the sender commits to, and $r_S$ and $r_R$ are the
randomness of the sender and receiver respectively.  We denote by
$\Gamma(S(b,r_S) \leftrightarrow R(r_B))$ the communication which one
obtains by running $S(b, r_S)$ interacting with~$R(r_R)$. Also,
$\<S(b, r_S) \leftrightarrow R(r_B)\>_S$ denotes the output which $S$
produces in such an interaction, which for an honest $S$ will be used
later to verify the commitment.  Let $\<S(b, r_S) \leftrightarrow
R(r_B)\>_R$ denote the output receiver $R$ produces which can be
thought of as a guess of $b$.


\begin{definition}
  An $\alpha$-binding $\beta$-hiding bit commitment protocol consists
  of two randomized interactive TM $S(b,r_S)$ and $R(r_R)$, as well as
  a check-algorithm~$R_C$, with the following properties.
  \begin{description}
  \item [Correctness] The protocol works if both parties are honest.
    More concretely, for~$\gamma = \Gamma(S(b,r_S) \leftrightarrow
    R(r_R))$ and $\tau = \<S(b,r_S) \leftrightarrow R(r_R)\>_S$ we have
    that $R_C(b, \gamma, \tau) = 1$ with probability $1-\negl(n)$.
  \item [Binding] A malicious sender cannot open the commitment in two
    ways: For any randomized polynomial time machine $S^*(r_S)$, setting $\gamma :=
    \Gamma(S^*(r_S) \leftrightarrow R(r_R))$, the probability that $S^*$
    outputs $\tau_{0}$ and~$\tau_{1}$ such that $R_C(0,\gamma,\tau_0)
    = 1$ and $R_C(1,\gamma,\tau_1) = 1$ is at most $\alpha$.
  \item[Hiding] For any randomized polynomial time machine~$R^*$,
    $\Pr\bigl[\<S(b,r_S) \leftrightarrow R^*(r_R)\>_R = b\bigr] \leq
    1-\frac{\beta}{2}$, if~$b$ is chosen uniformly at random.
  \end{description}
  If a protocol is $1/p(n)$-binding and $1-1/p(n)$ hiding for all
  polynomial~$p(\cdot)$ and all but finitely many~$n$ we say that it
  is a \emph{strong} bit commitment protocol.
\end{definition}

We point out that our notation is chosen such that for a strong bit
commitment scheme, $\alpha \to 0$ and~$\beta \to 1$.  Given an
$\alpha$-binding $\beta$-hiding bit commitment protocol, we would like
to use it to get a strong bit commitment protocol.  By a simulation
technique \cite{DaKiSa99} this is impossible if~$\alpha \geq \beta$
(there is a simple protocol which achieves this bound for semi-honest
parties without any assumption: with probability~$1-\alpha$ the sender
sends his output bit to the receiver, and otherwise neither party
sends anything).  Our results will show that if~$\alpha < \beta -
1/\poly(n)$ then such a strengthening exists.  Previously, such a
result was only known for $\alpha < \beta - 1/\polylog(n)$
\cite{HalRab08} (if one is restricted to reductions in which the
parties can only use the given protocol interactively, and not to
build a one-way function).

\subsection{Monotone Threshold Functions}

Given a weak protocol~$(S,R)$, we will transform it as follows: the
parties will execute~$(S,R)$ sequentially $k$ times, where the sender
uses random bits as input.  Then, they will apply an ``extraction
protocol'', which is made with the following two properties in mind:
a party who knows at least~$1-\alpha$ fraction of the committed bits
will know the output bit almost surely; a party who has no
information about $1-\beta$ fraction of the input bits will have no
information about the output bit almost surely.  It turns out that
such an extraction process can be modeled as a monotone boolean
circuit, where every wire is used in at most one gate (i.e., read-once
formulas).

To get such a circuit, we use the following lemma.  It can be obtained
by the techniques of Valiant \cite{Valian84}.  Also, it appears in a
more disguised form as Lemma 7 in \cite{DaKiSa99} (where it is used
for the same task we use it here, but not stated in this language).
\begin{lemma}[\cite{Valian84,DaKiSa99}]\label{lem:dks99}
  Let $\alpha$, $\beta$ with~$\alpha < \beta-1/\poly(n)$ be
  efficiently computable.

  There exists a $k\in\poly(n)$ and an efficiently computable monotone
  circuit~$g(m_1,\ldots,m_k)$ where every wire is used in at most one
  gate and such that
  \begin{align}\label{eq:8}
    \Pr[g(\mu_{\beta}^k) = 1] > 1-2^{-n}
  \end{align}
  and
  \begin{align}\label{eq:9}
    \Pr[g(\mu_{\alpha}^k) = 1] < 2^{-n}
  \end{align}
\end{lemma}

\subsection{Strengthening Bit Commitment}
We come to our result of this section.

\begin{theorem}
  Let~$(S,R)$ be an $\alpha$-binding and~$\beta$-hiding bit commitment
  protocol for polynomial time computable functions~$\alpha$ and
  $\beta$ with $\alpha < \beta - 1/\poly(n)$.  Then, there is an
  oblivious black-box construction of a bit commitment scheme
  $(S_0^S, R_0^R)$.
\end{theorem}
\begin{proof}
  Let~$g$ be as guaranteed by Lemma~\ref{lem:dks99} for these
  parameters~$\alpha,\beta$, and~$k$ the input length of~$g$.  The
  players run~$k$ instances of~$(S,R)$ sequentially, where the sender
  commits to a uniform random bit $c_i$ in instance~$i$.  We associate
  each $c_i$ to one of the input wires.  The sender then runs the
  following ``extraction protocol'', in which he uses additional
  variables\footnote{We assume fan-in $2$ on all
    gates.}~$c_{k+1},\ldots,c_{2k-1}$.  We associate those with the
  other wires  in $g$.\footnote{It is advisable to think of
    $g$ as evaluating which values in the following protocols look
    completely random: a $1$ on wire~$i$ signalizes that $c_i$ looks
    random to some party.}  The sender then traverses $g$ as if he were
  evaluating the circuit.  When encountering a gate with input
  wires~$i$, $j$, and output wire~$\ell$, he distinguish two cases.
  If the gate is an OR gate, set~$c_\ell = c_i \oplus c_j$.  If the
  gate is an AND gate, the sender sets~$c_\ell$ to be a completely new
  random value and sends $c_\ell \oplus c_i$ and $c_\ell \oplus c_j$
  to the receiver.  Once the sender ``evaluated'' $g$ in this way, he
  sends $b \oplus c_{2k-1}$ to the receiver (where~$b$ is the input
  to the sender, and~$c_{2k-1}$ is the bit associated with the output
  wire of $g$).

  To open the commitment, the sender sends all the opening information
  for the individual positions to the receiver.  The receiver then
  checks if the extraction phase was done consistently, and accepts if
  all these tests succeed and the output matches.

  \emph{Hiding:} We would like to use Theorem~\ref{'theorem bit
    protocols'}.  For this, it only remains to argue that the
  extraction protocol is information theoretically secure.  For any
  $\beta$-hiding random variables, we define a random variable $H$
  over~$\{0,1\}$ by fixing $\Pr[H = 1 |X = x, Z = z] =
  \frac{\min(\Pr[X=0, Z=z], \Pr[X=1,Z=z])}{\Pr[X=x,Z=z]}$.  One checks
  that for any function $f: \cZ \to \{0,1\}$ we have $\Pr[f(Z) = X | H
  = 1] = \frac{1}{2}$ and $\Pr[H = 1] = 1 - \frac{\beta}{2}$ (the
  point of $H$ is that it is $1$ exactly if $Z$ gives no information
  about $X$, and furthermore $H$ is often $1$).  We get random
  variables $H_1, \ldots, H_k$ in this way, and evaluate the circuit
  $g(H_1,\ldots,H_k)$.  One sees per induction that $Z_1,\ldots,Z_k$
  together with the communication produced gives no information about
  the bit corresponding to a wire iff the corresponding value when
  evaluating $g(H_1,\ldots,H_k)$ is one.  Since the probability that
  the output is $1$ is $1-2^{-n}$, we get the information theoretic
  security.

  \emph{Binding:} We can interpret the bit commitment protocol as an
  interactive weakly verifiable puzzle: in the interaction, the
  receiver is the person posing the puzzle, and the sender is the
  solver.  In order to solve the puzzle, the sender needs to send two
  valid openings to the receiver.

  In order to break the resulting puzzle, the sender needs to solve
  the subpuzzles in all positions $a_i$ for some input for which
  $g(a_1,\ldots,a_k) = 1$.  Using Theorem~\ref{'theorem wvp'} for
  $\delta = \beta$ thus gives the result.
\end{proof}

\fi
\iffull
\section{Acknowledgments}
We would like to thank the anonymous referees for useful comments.
\fi
{
\footnotesize
\bibliographystyle{alpha}
\bibliography{db}
}

\ifshort

\appendix

\section{Proof of Theorem~\ref{'theorem single pred'}}

\begin{proof}
  We describe algorithm $\Gen$.  First, obtain an estimate
  \begin{align}
    \Delta :\approx \Pr_{r,x}[C(x,P(x), r)=1] -
    \Pr_{r,x}[C(x,1-P(x), r)=1]
  \end{align}
  such that almost surely $\Delta$ is within~$\epsilon/4$ of the
  actual quantity.  If~$|\Delta| < 3\epsilon/4$, we can return~$\delta
  = 1$, $S = \{0,1\}^n$, and a circuit~$Q$ which guesses a uniform
  random bit.  If~$\Delta < -3\epsilon/4$ replace $C$ with the circuit
  which outputs~$1-C$ in the following argument.  Thus, from now on
  assume~$\Delta > 3\epsilon/4$ and that the actual quantity is at
  least~$\epsilon/2$.

  Sample random strings~$r_1,\ldots,r_m$ for $C$, where~$m =
  100n/\epsilon^2$, and let~$C'(x,b,i)$ be the circuit which computes
  $C(x,b,r_i)$.  Using a Chernoff bound, we see that for all $x \in \{0,1\}^n$
  \begin{align}
    \Pr_{r}[C&(x,P(x), r)=1] -
    \Pr_r[C(x,1-P(x), r)]=1]= \nonumber\\
    & \Pr_{i \in [m]}[C'(x,P(x), i)=1] - \Pr_{i \in [m]}[C'(x,1-P(x), i)]=1]
    \pm \epsilon/4
  \end{align}
  almost surely.

  Define, for any~$x$,
  \begin{align}\label{eq:4}
    \Delta_x := \Pr_{i \in [m]}[C'(x,P(x), r_i)=1] - \Pr_{i \in [m]}[C'(x,1-P(x), r_i)=1].
  \end{align}
  Because we define~$\Delta_x$ using~$C'$ instead of $C$, we can
  compute~$\Delta_x$ exactly for a given~$x$.  Now, order the~$x$
  according to $\Delta_x$: let~$x_1 \preceq x_2$ if $\Delta_{x_1} <
  \Delta_{x_2}$, or both~$\Delta_{x_1} = \Delta_{x_2}$ and~$x_1 \leq_L
  x_2$, where~$\leq_L$ is the lexicographic ordering on bitstrings.
  We can compute~$x_1 \preceq x_2$ efficiently given $(x_1,P(x_1))$
  and $(x_2, P(x_2))$.

  We claim that we can find~$x^*$ such that almost surely (we
  assume~$\epsilon > 10\cdot 2^{-n}$, otherwise we can get the theorem
  with exhaustive search)
  \begin{align}\label{eq:3}
    \frac{\epsilon}{20} \leq \frac{1}{2^n}
    \sum_{x\preceq x^*} \Delta_x \leq \frac{\epsilon}{10}.
  \end{align}
  We pick~$50 n/\epsilon$ candidates, then almost surely one of them
  satisfies~(\ref{eq:3}) with a safety margin of~$\epsilon/50$.  For
  each of those candidates we estimate $\frac{1}{2^n} \sum_{x\preceq
    x^*} \Delta_x$ up to an error of~$\epsilon/100$, and keep one for
  which almost surely~(\ref{eq:3}) is satisfied.  We let $S(x,P(x))$
  be the circuit which recognizes the set~$S^* := \{x | x \preceq
  x^*\}$, estimate~$\delta' := |S^*|/2^n$ almost surely within an
  error of~$\epsilon/1000$, and output~$\delta := \delta' -
  \epsilon/1000$.  The situation at this moment is illustrated in
  Figure~\ref{fig:littlepicture}, and it is clear that the properties
  ``large set'' and ``indistinguishability'' are satisfied.

  We next describe~$Q$.  On input $x$, $Q$ calculates (exactly)
  \begin{align}
    \Pr_{i \in [m]}[C'(x,1, i) = 1] - \Pr_{i \in [m]}[C'(x,0, i) = 1] = (2P(x)-1)\Delta_x\;.
  \end{align}
  If $(2P(x) - 1)\Delta_x \geq \Delta_{x^*}$ (where~$\Delta_{x^*}$ is
  defined by~(\ref{eq:4}) for the element~$x^*$ which defines~$S$),
  then output~$1$, if $(2P(x) - 1)\Delta_x \leq - \Delta_{x^*}$ output
  $0$.  If neither of the previous cases apply, output~$1$ with
  probability $\frac{1}{2}(1+\frac{(2P(x) -
    1)\Delta_x}{\Delta_{x^*}})$.

  To analyze the success probability of $Q$, we distinguish two cases.
  If~$x\notin S$, we know that~$\Delta_{x} \geq \Delta_{x^*}$.
  Therefore, in this case, we get the correct answer with probability
  $1$.  If~$x \in S$, it is also easy to check that this will give the
  correct answer with probability
  $\max\{\frac{1}{2}(1+\frac{\Delta_x}{\Delta_{x^*}}), 0\}$, and thus,
  on average $\frac{1}{|S|}\sum_{x \in S}
  \max\{\frac{1}{2}(1+\frac{\Delta_x}{\Delta_{x^*}}), 0\} \geq
  \frac{1}{|S|}\sum_{x \in S}
  \frac{1}{2}(1+\frac{\Delta_x}{\Delta_{x^*}}) \geq
  \frac{\epsilon}{20}$, using~(\ref{eq:3}).  In total, we have
  probability at least $\mu(S)(\frac{1}{2}+\frac{\epsilon}{40}) +
  (1-\mu(S))$ of answering correctly.  Since $\mu(S) \geq \delta$,
  this quantity is at least $1 - \frac{\delta}{2}$, which implies
  ``predictability''.  It is possible to make $Q$ deterministic by
  trying all possible values for the randomness and estimating the
  probability of it being correct.

  In order to get the additional property 1, we first run the
  above algorithm with input $\epsilon/3$ instead of $\epsilon$.  If
  $\delta > 1 - 2\epsilon/3$, we instead output the set containing all
  elements and return $1$ in place of $\delta$.  Note that
  indistinguishability still holds because we only add a fraction of
  $2\epsilon/3$ elements to $S$.  If $\delta \leq 1 - 2\epsilon/3$, we
  enlarge $S$ by at least $\epsilon/2$ and at most $2\epsilon/3$; this
  can be done by finding a new candidate for $x^*$ as above.  We then
  output the new set and $\delta' := \delta + \frac{\epsilon}{2}$.

  The additional properties 2, 3 and 4 follow by inspection of the proof.
\end{proof}

\section{Proof of Theorem~\ref{'theorem many pred'}}
\begin{proof}
  For any fixed tuple $t_i = (x_1,b_1,\ldots,x_{i-1},b_{i-1})$,
  consider the circuit~$C_{t_i}(x,b,r)$ which uses $r$ to pick
  random~$x_j$ for~$j > i$, and runs~$C^{(k)}(t_i,x,b, x_{i+1},
  P(x_{i+1}),\ldots,x_{k},P(x_k))$.\footnote{Formally, $C_{t_i}$ may
    not be a small circuit because at this point we do not assume $P$
    to be efficiently samplable, and $C_{t_i}$ seems to need to use
    $r$ to sample pairs $(x_j, P(x_j))$ for $j > i$.  However, we can
    think of $C_{t_i}$ as oracle circuit with oracle access to $P$ at
    this moment.  Inspection of the previous proof shows that later we
    can remove the calls to $P$, as the $x_{j}$ with $j > i$ can be
    fixed.}  We let $\GenS$ be the algorithm which invokes $\Gen$ with
  parameter $\frac{\epsilon}{4k}$ from Theorem~\ref{'theorem single
    pred'} on the circuit~$C_{t_i}$ and then returns the circuit
  recognizing a set from there.

  We next describe $\Gen$: For~$\ell = n k/\epsilon$ iterations, pick
  a random~$i \in \{0, \ldots, k-1\}$, use the procedure in
  Experiment~2 until loop $i$, and run algorithm $\Gen$ from
  Theorem~\ref{'theorem single pred'} with parameter
  $\frac{\epsilon}{4k}$.  This yields a parameter $\delta$ and a
  circuit $Q$.  We output the pair~$(Q,\delta)$ for the
  smallest~$\delta$ ever encountered.  Since $k$ and $\epsilon$ are
  polynomial in $n$, almost surely every time Theorem~\ref{'theorem
    single pred'} is used the almost surely part happens.  Thus, we
  get the property ``predictability'' (and in fact the stronger
  property listed under additionally) immediately.  We now argue
  ``large sets'': consider the random variable $\delta$ when we pick a
  random $i$, simulate an execution up to iteration $i$ of
  Experiment~2, then run $\Gen$ from Theorem~\ref{'theorem single
    pred'}.  Let $\delta^*$ be the $\frac{\epsilon}{k}$-quantile of
  this distribution, i.e., the smallest value such that with
  probability $\frac{\epsilon}{k}$ the value of $\delta$ is at most
  $\delta^*$.  The probability that a value not bigger than $\delta^*$
  is output by $\Gen$ is at least $1-(1-\frac{\epsilon}{k})^\ell > 1 -
  2^{-n}$, in which case ``large sets'' is satisfied.

  We show ``indistinguishability'' with a standard hybrid argument.
  Consider the Experiment~$H_j$:
\begin{lstlisting}
*\textbf{Random Experiment $H_j$:}*
        for $i := 1$ to $k$ do
           $t_i := (x_1,b_1,\ldots,t_{i-1}, b_{i-1})$
           $S_{i}^* := \GenS(t_i)$
           $x_i \leftarrow \{0,1\}^n$
           if $i \leq j$ and $x_i \in S^*_t$ then
              $b_i \leftarrow \{0,1\}$
           else
              $b_i := P(x_i)$
           end if
        end for
        $r \leftarrow \{0,1\}^*$
        output $C^{(k)}(x_1,b_1,\ldots,x_k,b_k,r)$
\end{lstlisting}

Experiment~$H_0$ is equivalent to Experiment~$1$, Experiment $H_{k}$
is the same as Experiment~$2$.  Applying Theorem~\ref{'theorem single
  pred'} we get that for every fixed~$x_1,\ldots,x_{j-1},
b_1,\ldots,b_{j-1}$, almost surely
\begin{align}
  \Bigl|
  \Pr_{x_i,\ldots,x_k}[&C^{(k)}(x_1,b_1,\ldots,x_{j-1},b_{j-1},x_j,b_j^{(j-1)},\ldots,x_k,P(x_k)) = 1] -\nonumber\\
  \Pr_{x_i,\ldots,x_k}[&C^{(k)}(x_1,b_1,\ldots,x_{j-1},b_{j-1},x_j,b_j^{(j)},\ldots,x_k,P(x_k)) = 1] \Bigr|
  \leq \epsilon/4k\;,
\end{align}
where $b_j^{(j-1)}$ is chosen as $b_j^{(j-1)} = P(x_j)$ in experiment
$H_{j-1}$, and $b_{j}^{(j)}$ is chosen the same way as $b_j$ is chosen
in experiment $H_{j}$ (in Theorem~\ref{'theorem single pred'} the bit
is flipped, but when using a uniform bit instead of flipping it the
distinguishing probability only gets smaller).  Applying the triangle
inequality $k-1$ times we get that almost surely the difference of the
probabilities in Experiment~1 and Experiment~2 is at most
$\frac{\epsilon}{2}$.  Since ``almost surely'' means with
probabilities $1-2^{-n}\poly(n) > 1-\frac{\epsilon}{2}$, we get
``indistinguishability''.

We already showed the additional Property 1.  Properties 2,3, and 4
follow by inspection.
\end{proof}

\section{Finishing the Proof of Theorem \ref{'theorem wvp'}}
The algorithm itself is described in the main part of the paper.  Here
we show that it achieves the guarantees as promised.
\paragraph{Overview of Correctness}
The interesting case is when $\Gen$ does not recurse.  In this case we
know that $C$ has higher success probability than $\Pr_{u\leftarrow
  \mu_\delta^k}[g(u)=1]$, but for most $\pi^*$, the surpluses
$S_{\pi^*,0}$ and $S_{\pi^*,1}$ are less than
$(1-\frac{1}{k})\epsilon$.  Intuitively, then $C$ is correct on the
first coordinate unusually often when $c \in \cG_1 - \cG_0$ (as this
is the only time that being correct on the first coordinate helps).
If we could assume that 1) that the algorithm \emph{always} outputs an
answer, and 2) for \emph{every} $\pi^*$, the surpluses, $S_{\pi^*,0}$
and $S_{\pi^*,1}$ are less than $(1-\frac{1}{k})\epsilon$, then the
theorem would follow by straight-forward manipulations of probability.

Unfortunately these assumptions are not true, but the proof below
shows that because these assumptions only fail slightly, not much is
lost.  Informally, Equations~\ref{eq:22}-\ref{eq:27} show that if the
algorithm fails to output an answer it is either because
$\Pr_{\pi^{(k)}}[c \in \cG_1 - \cG_0|\pi_1 = \pi^*]$ is very small (in
which case this $\pi^*$ will not contribute much anyhow), or because
we are unlucky (which happens with very small probability).
Additionally, Equations~\ref{eq:29}-\ref{eq:33} show that because we
did not find a $\pi^*$ with large surplus, we can assume that (unless
we were very unlucky) there are few $\pi^*$ with large surpluses,
which cannot have undue influence.

\paragraph{Analysis of Correctness}

Consider first the case that we find $(x^*, b)$ for which
$\Stilde_{x^*,b} \geq (1-\frac{3}{4k})\epsilon$.  We can assume that
$S_{x^*,b} \geq (1-\frac{1}{k})\epsilon$, since the error is at
most~$\epsilon/(4k)$ almost surely.  Thus, we satisfy all the
requirements to use $\Gen$ with~$k-1$ (using $x^*$ as the first input
and $g(b, \cdot)$ as the monotone function with $k$ inputs), which
will return a non-rewinding circuit for which
$\Pr_{(x,\Gamma),r}[\Gamma(D(x,r)) = 1] \geq \delta +
(1-\frac{1}{k})\epsilon/6(k-1) = \delta + \epsilon/6k$.  The remaining
properties are easily verified.

The more interesting case is if $\Gen$ does not recurse.  First, we get,
for any puzzle~$\pi^* = (x^*,\Gamma^*)$ (simply using~(\ref{eq:13})
and~$\cG_0 \subseteq \cG_1$):
\begin{align}
  \Pr_{u \leftarrow \mu_{\delta}^k}[u\in\cG_1 - \cG_0] = \Pr_{\pi^{(k)}}[c \in \cG_1 - \cG_0|\pi_1 = \pi^*] -
  (S_{\pi^*,1} - S_{\pi^*,0})
\end{align}
and thus, still fixing~$\pi^*$ and multiplying by  $\Pr_r[\Gamma^*(D(x^*,r))=1] / \Pr_{u \leftarrow \mu_{\delta}^k}[u\in\cG_1 - \cG_0]$:
\begin{align}
  \Pr_r[\Gamma^*(D(x^*,r))=1] &=
  \frac{\Pr_r[\Gamma^*(D(x^*,r))=1] \Pr_{\pi^{(k)}}[c \in \cG_1 - \cG_0|\pi_1 = \pi^*]}
  {\Pr_{u \leftarrow \mu_{\delta}^k}[u\in\cG_1-\cG_0]}  \nonumber\\
  &\qquad\qquad-
  \frac{\Pr_r[\Gamma^*(D(x^*,r))=1] (S_{\pi^*,1}-S_{\pi^*,0})}
  {\Pr_{u \leftarrow \mu_{\delta}^k}[u\in\cG_1-\cG_0]}.\label{eq:10}
\end{align}
We bound the first summand in (\ref{eq:10}):
\begin{align}  \label{eq:22}
  \Pr_r[\Gamma^*(&D(x^*,r))=1] \Pr_{\pi^{(k)}}[c \in \cG_1 - \cG_0|\pi_1 = \pi^*]\nonumber\\
  &=
  \Pr_r[\Gamma^*(D(x^*,r))\neq \bot]
  \Pr_{\pi^{(k)}}[c_1 = 1 | c \in \cG_1 - \cG_0, \pi_1 = \pi^*]
  \Pr_{\pi^{(k)}}[c \in \cG_1 - \cG_0 | \pi_1 = \pi^*].
\end{align}
If $\Pr[c\in\cG_1 - \cG_0|\pi_1 = \pi^*] \leq \frac{\epsilon}{6k}$,
then~$0 \geq \Pr[c_1 = 1 | c \in \cG_1 - \cG_0, x_1 = \pi^*] \Pr[c \in
\cG_1 - \cG_0 | \pi_1 = \pi^*] -\frac{\epsilon}{6k}$.  If
$\Pr[c\in\cG_1 - \cG_0| \pi_1 = \pi^*] > \frac{\epsilon}{6k}$ then
$\Pr[\Gamma^*(C(x^*))\neq\bot] \geq 1-\frac{\epsilon}{6k}$ since $D$
only outputs~$\bot$ if after $\frac{6k}{\epsilon}\log(6k/\epsilon)$
none of the elements~$c$ was in~$\cG_1-\cG_0$.  In both cases:
\begin{align}
  \Pr_r[\Gamma^*(&D(x^*,r))\neq \bot]
  \Pr_{\pi^{(k)}}[c_1 = 1 | c \in \cG_1 - \cG_0, \pi_1 = \pi^*]
  \Pr_{\pi^{(k)}}[c \in \cG_1 - \cG_0 | \pi_1 = \pi^*]\\
  &\geq
  \Pr_{\pi^{(k)}}[c_1 = 1 | c \in \cG_1 - \cG_0 , \pi_1 = \pi^*]
  \Pr_{\pi^{(k)}}[c \in \cG_1 - \cG_0 | \pi_1 = \pi^*] - \frac{\epsilon}{6 k}\\
  &=
  \Pr_{\pi^{(k)}}[c_1 = 1 \land c \in \cG_1 - \cG_0| \pi_1 = \pi^*] - \frac{\epsilon}{6 k}\\
  &=
  \Pr_{\pi^{(k)}}[g(c) = 1 | \pi_1 = \pi^*] - \Pr_{\pi^{(k)}}[c \in \cG_0 | \pi_1 = \pi^*] -
  \frac{\epsilon}{6 k}\\
  &=\Pr_{\pi^{(k)}}[g(c) = 1 | \pi_1 = \pi^*] - \Pr_{u \leftarrow \mu_{\delta}^k}[u \in \cG_0] - S_{\pi^*,0}
  - \frac{\epsilon}{6 k}  \label{eq:27}
\end{align}
Inserting into (\ref{eq:10}) gives
\begin{align}
  \E_{\pi^*}\bigl[ \Pr_{r}[\Gamma^*(D(x^*,r))= 1]\bigr]
  &\geq
  \E_{\pi^*}\Bigl[
  \frac{\Pr_{\pi^{(k)}}[g(c) = 1 | \pi_1 = \pi^*] - \Pr_{u \leftarrow \mu_{\delta}^k}[u \in \cG_0] - \frac{\epsilon}{6 k}}
  {\Pr_{u \leftarrow \mu_{\delta}^k}[u\in\cG_1-\cG_0]}\Bigr] \nonumber\\ &\qquad\qquad
  - \E_{\pi^*}\Bigl[\frac{
    S_{\pi^*,0} + \Pr_r[\Gamma^*(D(x^*,r))=1]
    (S_{\pi^*,1} - S_{\pi^*,0})}{\Pr_{u \leftarrow \mu_{\delta}^k}[u\in\cG_1-\cG_0]}\Bigr]\label{eq:12}
\end{align}
We bound the second summand of~(\ref{eq:12}).  Consider the set~$\cW$
of puzzles for which both $S_{\pi^*,1}$ and~$S_{\pi^*,0}$ are
not very large.  Formally:
\begin{align}  \label{eq:29}
  \cW := \Bigl\{ \pi \Bigm| \Bigl(S_{\pi,0} \leq
  \bigl(1-\frac{1}{2k}\bigr)\epsilon\Bigr) \land \Bigl(
  S_{\pi,1} \leq \bigl(1-\frac{1}{2k}\bigr)\epsilon\Bigr) \Bigr\}.
\end{align}
Almost surely, $\mu(\cW) \geq 1-\frac{\epsilon}{6k}$: otherwise $\Gen$
would accept one of the sampled puzzles almost surely and recurse.
Thus, we get
\begin{align}
  \E_{\pi^*}[
  &S_{\pi^*,0} + \Pr_r[\Gamma^*(D(x^*,r))=1]
  (S_{\pi^*,1} - S_{\pi^*,0})]\nonumber\\
  &\leq
  \frac{\epsilon}{6k} +
  \E_{\pi^* \leftarrow \cW}[
  S_{\pi^*,0} + \Pr_r[\Gamma^*(D(x^*,r))=1]
  (S_{\pi^*,1} - S_{\pi^*,0})]\\
  &\leq
  \frac{\epsilon}{6k} +
  \E_{\pi^*\leftarrow\cW}[
  S_{\pi^*,0} + \Pr_r[\Gamma^*(D(x^*,r))=1]
  ((1-\frac{1}{2k})\epsilon - S_{\pi^*,0})]\\
  &\leq
  \frac{\epsilon}{6k} +
  \E_{\pi^*\leftarrow\cW}[
  S_{\pi^*,0} + ((1-\frac{1}{2k})\epsilon - S_{\pi^*,0})]\\
  &= \Bigl(1-\frac{1}{3k}\Bigr)\epsilon. \label{eq:33}
\end{align}
We insert into (\ref{eq:12}) (and then use $\Pr[g(u)=1]= \Pr[u \in \cG_0]
+ \delta \Pr[u\in\cG_1 - \cG_0]$) to get
\begin{align}
  \Pr_{\pi,r}[\Gamma(D(x),r)= 1]
  &\geq
  \E_{\pi^*}\Bigl[
  \frac{\Pr_{\pi^{(k)}}[g(c) = 1 | \pi_1 = \pi^*] - \Pr_{u \leftarrow \mu_{\delta}^k}[u \in \cG_0] - \frac{\epsilon}{6 k}}
  {\Pr_{u \leftarrow \mu_{\delta}^k}[u\in\cG_1-\cG_0]}
  -
  \frac{(1-\frac{1}{3k})\epsilon}{\Pr_{u \leftarrow \mu_{\delta}^k}[u\in\cG_1-\cG_0]}\Bigr]\\
  &\geq
  \E_{\pi^*}\Bigl[
  \frac{\Pr_{u\leftarrow\mu_\delta^k}[g(u) = 1] + \epsilon - \Pr_{u \leftarrow \mu_{\delta}^k}[u \in \cG_0] - (1-\frac{1}{6k})\epsilon}
  {\Pr_{u \leftarrow \mu_{\delta}^k}[u\in\cG_1-\cG_0]}\Bigr] \\
  &=
  \frac{\delta \Pr_{u \leftarrow \mu_{\delta}^k}[u\in \cG_1-\cG_0] + \frac{\epsilon}{6k}}
  {\Pr_{u \leftarrow \mu_{\delta}^k}[u\in\cG_1-\cG_0]}
  \geq \delta + \frac{\epsilon}{6k}.
\end{align}
This concludes the proof of Theorem~\ref{'theorem wvp'}.\hfill$\Box$

\section{Example: Bit Commitment}
Theorems~\ref{'theorem bit protocols'} and~\ref{'theorem wvp'} can be
used to show how to strengthen bit commitment protocols.  We explain
this as an example here.  Assume we have given a weak bit protocol,
where a cheating receiver can guess a bit after the commitment phase
with probability $1 - \frac{\beta}{2}$, and a cheating sender can
change the bit he committed to with probability $\alpha$.  We show that
such a protocol can be strengthened if $\alpha < \beta - 1/\poly(n)$.

We should point out that a different way to prove a similar theorem
exists: one can first show that such a weak bit-commitment protocol
implies one-way functions (using the techniques of \cite{ImpLub89}).
The long sequence of works \cite{HILL99, Naor91, Rompel90, NgOnVa06,
  HaiRei07} imply that one-way functions are sufficient to build bit
commitment protocols (the first two papers will yield statistically
binding protocols, the last three statistically hiding protocols).
However, this will be less efficient and also seems less natural than
the method we use here.

In the following, we first define weak bit commitment protocols.  We
then recall a Theorem by Valiant \cite{Valian84}, and then show how to
use it to strengthen bit commitment.

\subsection{Weak Bit Commitment Protocols}
We formalize a ``weak'' bit commitment protocol between a sender and a
receiver by considering algorithms~$S(b, r_S)$ and~$R(r_R)$, where~$b$
is the bit which the sender commits to, and $r_S$ and $r_R$ are the
randomness of the sender and receiver respectively.  We denote by
$\Gamma(S(b,r_S) \leftrightarrow R(r_B))$ the communication which one
obtains by running $S(b, r_S)$ interacting with~$R(r_R)$. Also,
$\<S(b, r_S) \leftrightarrow R(r_B)\>_S$ denotes the output which $S$
produces in such an interaction, which for an honest $S$ will be used
later to verify the commitment.  Let $\<S(b, r_S) \leftrightarrow
R(r_B)\>_R$ denote the output receiver $R$ produces which can be
thought of as a guess of $b$.


\begin{definition}
  An $\alpha$-binding $\beta$-hiding bit commitment protocol consists
  of two randomized interactive TM $S(b,r_S)$ and $R(r_R)$, as well as
  a check-algorithm~$R_C$, with the following properties.
  \begin{description}
  \item [Correctness] The protocol works if both parties are honest.
    More concretely, for~$\gamma = \Gamma(S(b,r_S) \leftrightarrow
    R(r_R))$ and $\tau = \<S(b,r_S) \leftrightarrow R(r_R)\>_S$ we have
    that $R_C(b, \gamma, \tau) = 1$ with probability $1-\negl(n)$.
  \item [Binding] A malicious sender cannot open the commitment in two
    ways: For any randomized polynomial time machine $S^*(r_S)$, setting $\gamma :=
    \Gamma(S^*(r_S) \leftrightarrow R(r_R))$, the probability that $S^*$
    outputs $\tau_{0}$ and~$\tau_{1}$ such that $R_C(0,\gamma,\tau_0)
    = 1$ and $R_C(1,\gamma,\tau_1) = 1$ is at most $\alpha$.
  \item[Hiding] For any randomized polynomial time machine~$R^*$,
    $\Pr\bigl[\<S(b,r_S) \leftrightarrow R^*(r_R)\>_R = b\bigr] \leq
    1-\frac{\beta}{2}$, if~$b$ is chosen uniformly at random.
  \end{description}
  If a protocol is $1/p(n)$-binding and $1-1/p(n)$ hiding for all
  polynomial~$p(\cdot)$ and all but finitely many~$n$ we say that it
  is a \emph{strong} bit commitment protocol.
\end{definition}

We point out that our notation is chosen such that for a strong bit
commitment scheme, $\alpha \to 0$ and~$\beta \to 1$.  Given an
$\alpha$-binding $\beta$-hiding bit commitment protocol, we would like
to use it to get a strong bit commitment protocol.  By a simulation
technique \cite{DaKiSa99} this is impossible if~$\alpha \geq \beta$
(there is a simple protocol which achieves this bound for semi-honest
parties without any assumption: with probability~$1-\alpha$ the sender
sends his output bit to the receiver, and otherwise neither party
sends anything).  Our results will show that if~$\alpha < \beta -
1/\poly(n)$ then such a strengthening exists.  Previously, such a
result was only known for $\alpha < \beta - 1/\polylog(n)$
\cite{HalRab08} (if one is restricted to reductions in which the
parties can only use the given protocol interactively, and not to
build a one-way function).

\subsection{Monotone Threshold Functions}

Given a weak protocol~$(S,R)$, we will transform it as follows: the
parties will execute~$(S,R)$ sequentially $k$ times, where the sender
uses random bits as input.  Then, they will apply an ``extraction
protocol'', which is made with the following two properties in mind:
a party who knows at least~$1-\alpha$ fraction of the committed bits
will know the output bit almost surely; a party who has no
information about $1-\beta$ fraction of the input bits will have no
information about the output bit almost surely.  It turns out that
such an extraction process can be modeled as a monotone boolean
circuit, where every wire is used in at most one gate (i.e., read-once
formulas).

To get such a circuit, we use the following lemma.  It can be obtained
by the techniques of Valiant \cite{Valian84}.  Also, it appears in a
more disguised form as Lemma 7 in \cite{DaKiSa99} (where it is used
for the same task we use it here, but not stated in this language).
\begin{lemma}[\cite{Valian84,DaKiSa99}]\label{lem:dks99}
  Let $\alpha$, $\beta$ with~$\alpha < \beta-1/\poly(n)$ be
  efficiently computable.

  There exists a $k\in\poly(n)$ and an efficiently computable monotone
  circuit~$g(m_1,\ldots,m_k)$ where every wire is used in at most one
  gate and such that
  \begin{align}\label{eq:8}
    \Pr[g(\mu_{\beta}^k) = 1] > 1-2^{-n}
  \end{align}
  and
  \begin{align}\label{eq:9}
    \Pr[g(\mu_{\alpha}^k) = 1] < 2^{-n}
  \end{align}
\end{lemma}

\subsection{Strengthening Bit Commitment}
We come to our result of this section.

\begin{theorem}
  Let~$(S,R)$ be an $\alpha$-binding and~$\beta$-hiding bit commitment
  protocol for polynomial time computable functions~$\alpha$ and
  $\beta$ with $\alpha < \beta - 1/\poly(n)$.  Then, there is an
  oblivious black-box construction of a bit commitment scheme
  $(S_0^S, R_0^R)$.
\end{theorem}
\begin{proof}
  Let~$g$ be as guaranteed by Lemma~\ref{lem:dks99} for these
  parameters~$\alpha,\beta$, and~$k$ the input length of~$g$.  The
  players run~$k$ instances of~$(S,R)$ sequentially, where the sender
  commits to a uniform random bit $c_i$ in instance~$i$.  We associate
  each $c_i$ to one of the input wires.  The sender then runs the
  following ``extraction protocol'', in which he uses additional
  variables\footnote{We assume fan-in $2$ on all
    gates.}~$c_{k+1},\ldots,c_{2k-1}$.  We associate those with the
  other wires  in $g$.\footnote{It is advisable to think of
    $g$ as evaluating which values in the following protocols look
    completely random: a $1$ on wire~$i$ signalizes that $c_i$ looks
    random to some party.}  The sender then traverses $g$ as if he were
  evaluating the circuit.  When encountering a gate with input
  wires~$i$, $j$, and output wire~$\ell$, he distinguish two cases.
  If the gate is an OR gate, set~$c_\ell = c_i \oplus c_j$.  If the
  gate is an AND gate, the sender sets~$c_\ell$ to be a completely new
  random value and sends $c_\ell \oplus c_i$ and $c_\ell \oplus c_j$
  to the receiver.  Once the sender ``evaluated'' $g$ in this way, he
  sends $b \oplus c_{2k-1}$ to the receiver (where~$b$ is the input
  to the sender, and~$c_{2k-1}$ is the bit associated with the output
  wire of $g$).

  To open the commitment, the sender sends all the opening information
  for the individual positions to the receiver.  The receiver then
  checks if the extraction phase was done consistently, and accepts if
  all these tests succeed and the output matches.

  \emph{Hiding:} We would like to use Theorem~\ref{'theorem bit
    protocols'}.  For this, it only remains to argue that the
  extraction protocol is information theoretically secure.  For any
  $\beta$-hiding random variables, we define a random variable $H$
  over~$\{0,1\}$ by fixing $\Pr[H = 1 |X = x, Z = z] =
  \frac{\min(\Pr[X=0, Z=z], \Pr[X=1,Z=z])}{\Pr[X=x,Z=z]}$.  One checks
  that for any function $f: \cZ \to \{0,1\}$ we have $\Pr[f(Z) = X | H
  = 1] = \frac{1}{2}$ and $\Pr[H = 1] = 1 - \frac{\beta}{2}$ (the
  point of $H$ is that it is $1$ exactly if $Z$ gives no information
  about $X$, and furthermore $H$ is often $1$).  We get random
  variables $H_1, \ldots, H_k$ in this way, and evaluate the circuit
  $g(H_1,\ldots,H_k)$.  One sees per induction that $Z_1,\ldots,Z_k$
  together with the communication produced gives no information about
  the bit corresponding to a wire iff the corresponding value when
  evaluating $g(H_1,\ldots,H_k)$ is one.  Since the probability that
  the output is $1$ is $1-2^{-n}$, we get the information theoretic
  security.

  \emph{Binding:} We can interpret the bit commitment protocol as an
  interactive weakly verifiable puzzle: in the interaction, the
  receiver is the person posing the puzzle, and the sender is the
  solver.  In order to solve the puzzle, the sender needs to send two
  valid openings to the receiver.

  In order to break the resulting puzzle, the sender needs to solve
  the subpuzzles in all positions $a_i$ for some input for which
  $g(a_1,\ldots,a_k) = 1$.  Using Theorem~\ref{'theorem wvp'} for
  $\delta = \beta$ thus gives the result.
\end{proof}


\fi
\end{document}